\documentclass[10pt, conference, letterpaper]{ieeeconf}
\usepackage{graphicx} % images
\graphicspath{{./images/}} % image path
\usepackage{float} % keep pics exactly where created w. [H]
\usepackage{hyperref} % internal reference links
\usepackage{mathtools} % math
\usepackage{amsmath} % math
\usepackage{amssymb} % symbols
\usepackage{comment}
\usepackage{xcolor}
\usepackage{accents}
\usepackage{booktabs}
\newtheorem{remark}{Remark}
\newtheorem{assumption}{Assumption}
\newtheorem{definition}{Definition}
\newtheorem{theorem}{Theorem}

\let\labelindent\relax
\usepackage{enumitem}

\usepackage{algorithm} % algorithm
\usepackage{algpseudocode} % algorithm

\newcommand{\set}[1]{\mathcal{#1}}
\newcommand{\ubar}[1]{\underaccent{\bar}{#1}}

\IEEEoverridecommandlockouts
\title{An Adaptive-Sampling Control Framework for Constrained Linear Systems\\ with Robust Safety Guarantees}

\author{Spencer Schutz$^{1}$, Charlott Vallon$^{2}$, and Francesco Borrelli$^{1}$
\thanks{$^{1}$S. Schutz and F. Borrelli are with the Mechanical Engineering Department, University of California, Berkeley, Berkeley, CA 94720. $\{$spencer.schutz,fborrelli$\}$@berkeley.edu.}
\thanks{$^{2}$C. Vallon is with the Mechanical Engineering Department, University of California, Santa Barbara, Santa Barbara, CA 93106. cvallon@ucsb.edu.}
\thanks{This work is sponsored by the Department of the Navy, Office of Naval Research ONR N00014-24-2099.}
}

\begin{document}
\maketitle

\begin{abstract}
Adaptive-sampling control balances control performance with resource efficiency. However, existing methods either fail to guarantee robust constraint satisfaction during rate transitions or require computationally expensive online optimization. This paper proposes an adaptive-sampling control framework for linear systems subject to polytopic state and input constraints and bounded additive disturbances. Given a time-varying reference control update rate provided by a reasoner, our framework continuously calculates Model Predictive Control (MPC) update rates that ensure robust constraint satisfaction at all time steps. Offline, robust $M$-step hold control invariance is used to precompute invariant sets for a list of update rates and transition sets between them. Online, these sets are used in real time to guarantee recursive feasibility and finite-time transitions to the reference update rate. The utility of the architecture is demonstrated in a cruise control simulation. 
\end{abstract}

\section{Introduction} \label{sec:intro}
Adaptive-sampling control adjusts the input update rate online to balance performance and the cost of input updates in real-time~\cite{Dorf_Adaptive_Sampling_PID, Henriksson_Adaptive_Sampling_LQR, Xue_Adaptive_Sampling_MPC, Choi_Adaptive_Sampling_MPC, Gomozov_Adaptive_Sampling_MPC}. By updating at high frequency only when necessary, these approaches reduce computation load during less demanding conditions. Decreasing the update frequency can also lessen energy, logistical, and monetary update costs in applications like embedded control platforms, transportation systems, and healthcare. 

Changes to the input update rate must be carefully coordinated. Rigorous safety guarantees (i.e. constraint satisfaction) must be maintained during a rate transition, and the adaptive-sampling algorithm should weigh all costs of a potential change, including those of the switch itself and of the subsequent control performance. Balancing these considerations while minimizing online computational load is challenging.

Existing approaches typically minimize the number of control updates, without ensuring constraint satisfaction. For example, adaptive-sampling PID~\cite{Dorf_Adaptive_Sampling_PID} and LQR~\cite{Henriksson_Adaptive_Sampling_LQR} controllers adjust update rates heuristically based on system behavior, while adaptive-sampling MPC schemes often vary the sampling time without providing safety guarantees or a principled holistic performance tradeoff~\cite{Gomozov_Adaptive_Sampling_MPC,Xue_Adaptive_Sampling_MPC,Choi_Adaptive_Sampling_MPC}. 

Self-triggered MPC selects update rates online but requires solving multiple optimal control problems or a mixed-integer program at each input update~\cite{Zhan_robust_self_trigger,Brunner_robust_self_trigger}, and cannot guarantee convergence to a reference update rate~\cite{Zhan_robust_self_trigger,Brunner_robust_self_trigger,Dai_cost-based_self_trigger}.

Move-blocking MPC constrains the control input to remain constant over predefined intervals within the prediction horizon, but either requires updates at every time step to ensure recursive feasibility~\cite{Behrunani_Multi_Horizon_MPC, Gondhalekar_Safe_Move-block_MPC,Cagienard_Move-block_MPC}, lacks robustness to uncertainty~\cite{Gondhalekar_Least_Restrictive_Move-block_MPC}, or cannot adapt the blocking structure online without re-solving complex optimization problems~\cite{Son_semi-explicit_move-block_MPC}. %These approaches also generally do not provide a mechanism for systematically trading off performance against update rate during operation.

This paper proposes an adaptive-sampling MPC framework for constrained, uncertain linear systems. The algorithm aims to track a reference input update rate while maintaining robust constraint satisfaction. We vary the update rate through multi-step input holds denoted ``$M$-step holds," where $M$ is the number of discrete time steps for which an input must be held constant. Offline, we use robust $M$-step hold control invariance~\cite{Schutz_MSH, Schutz_MPC} to precompute safe states for a list of update rates (values of $M$), along with new robust transition states between them. Online, an MPC controller uses these sets with an update rate chosen by a parameter selection algorithm, solving a single convex quadratic program per input update and avoiding mixed-integer programs. We provide proofs of robust constraint satisfaction, recursive feasibility, and finite-time transition to the reference update rate. We demonstrate the utility of our framework in a cruise control example.

\section{Notation} \label{sec:not}
Let $\mathbb{Z}_{\geq 0}$ and $\mathbb{Z}_{>0}$ denote the non-negative and positive integers, and let $\mathbb{Z}_{a:b}$ denote the integers from $a$ to $b$, inclusive. The map $i\mapsto M\lfloor i/M\rfloor$ rounds $i\in\mathbb{Z}_{\geq 0}$ down to the nearest multiple of $M\in\mathbb{Z}_{>0}$. For a matrix $E$ and sets $\set{X}$ and $\set{W}$, the image, Minkowski sum, and Pontryagin difference~\cite{Kolmanovsky_sets} are $E\set{W} = \{Ew \mid w \in \set{W}\}$, $\set{X}\oplus\set{W} = \{x+w \mid x\in\set{X}, w\in\set{W}\}$, and $\set{X}\ominus\set{W} = \{y \mid y+w\in\set{X}, \forall w\in\set{W}\}$, respectively. For a vector $x$ and symmetric matrix $Q \succeq 0$, $\|x\|_Q^2 = x^\top Q x$.

\section{Problem Formulation}\label{sec:prob}

\subsection{System}
Consider a discrete-time linear time-invariant system model with state $x_t$, control input $u_t$, and uncertainty $w_t$,
\begin{subequations}\label{eqn:dyn_const}
\begin{gather}
    x_{t+1}=Ax_t+Bu_t+Ew_t \label{eqn:dyn}\\
    x_t\in\set{X},~u_t\in\set{U}, ~w_t\in\set{W}_{\delta(t)}, \label{eqn:const}
\end{gather}
\end{subequations}
with 
%time step $t\in\mathbb{Z}_{\geq 0}$ and 
polytopic state constraints $\set{X}$, input constraints $\set{U}$, and uncertainty sets $\set{W}_{\delta(t)}$ (defined below). 

We consider the input subject to aperiodic holds of varying duration, where the duration of the $k$-th input hold is denoted $M_k\in\mathbb{Z}_{>0}$ (see Fig.~\ref{fig:aperiodic}). The time step corresponding to the start of the $k$-th input hold is denoted $\tau_k$, so that $\tau_{k+1}=\tau_k+M_k$.
%and denoted to begin at time $t=\tau_k$ (where $k\in\mathbb{Z}_{\geq 0}$ and we assume $\tau_0 = 0$). 
%input updates are indexed by event $k\in\mathbb{Z}_{\geq 0}$ rather than time step $t$. 
%The time step corresponding to the $k$-th input update is denoted $\tau_k$ (we assume $\tau_0=0$). 
%The length of the $k$-th input hold is given by $M_k\in\mathbb{Z}_{>0}$, such that $\tau_{k+1}=\tau_k+M_k$. 
For all $t$ satisfying $\tau_k\leq t<\tau_{k+1}$ for some $k$, the function $\delta(t)=t-\tau_k$ returns the number of time steps elapsed since the last input update.

%The sequence $(M_k)_{k=0}^\infty$ is considered unknown \textit{a priori}. 

\begin{figure}
    \centering
    \includegraphics[width=1\linewidth]{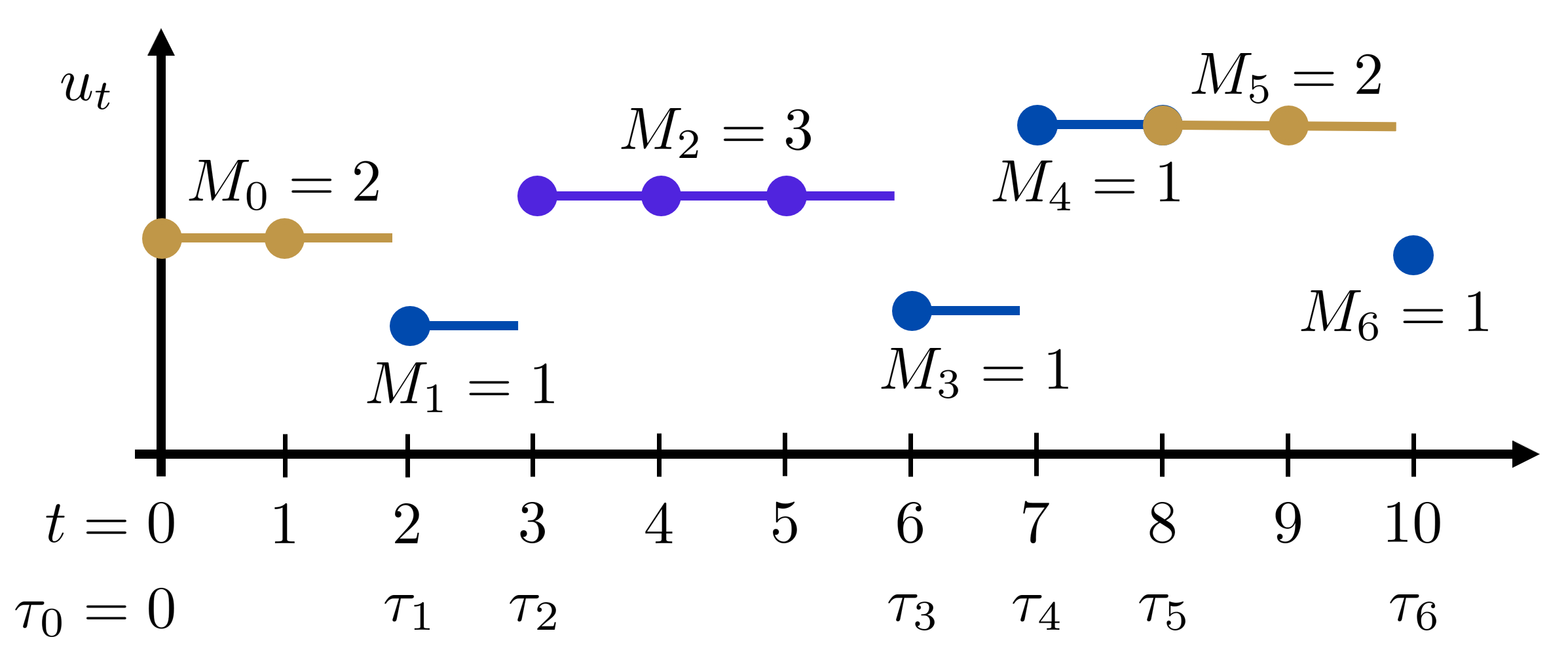}
    \caption{Aperiodic input update time steps $\tau_k$ separated by $M_k$-step holds.}
    \label{fig:aperiodic}
\end{figure}

The uncertainty model $\set{W}_{\delta(t)}$ in~\eqref{eqn:const} is chosen to capture the propagating effects of disturbances and modeling error compared to the real system \cite{STEIN20112626, tomlin2017, hur2019, VOELKER2013943, roy2010}. 
Specifically, we formulate the uncertainty bounds $\set{W}_{\delta(t)}$ to grow as a function of time elapsed since the most recent input update (and simultaneous state measurement). Thus these bounds reset at each time step $\tau_k$.

\begin{assumption}\label{asm:w_nest}
$\set{W}_0$ contains the zero vector and the sets grow monotonically such that $\set{W}_i \subseteq \set{W}_{i+1}$ for all $i \in \mathbb{Z}_{\geq 0}$.
\end{assumption}

\subsection{Robust Adaptive-Sampling MPC}\label{sec:ocp}
% we are living in an adaptive-sampling control world. we want to use an Mk-step hold MPC to control our system (1), which calculates an Mk-step hold input and re-calculates every blah time steps. (maybe something about why MPC is nice and what the high-level formulation features are)

Here, we utilize a robust $M_k$-step hold model predictive control (MPC) policy to control the system~\eqref{eqn:dyn_const}. 
% At every $\tau_k$, the MPC receives a reference hold $M_k^\mathrm{ref}$ from the reasoner, and 
% \begin{enumerate}[label=(\textit{\roman*})]
%     \item solves an optimal control problem (OCP) with parameters selected based on the current state and $M_k^\mathrm{ref}$, and
%     \item applies the first optimal input in an $M_k$-step hold.
% \end{enumerate}
% Specifically, at each $\tau_k$ a hold length $M_k$, OCP horizon $N_k$, and target set $\set{X}_{f,k}$ are selected. 
% %The OCP formulation builds on one introduced in~\cite{Schutz_MPC}. 
% The OCP then searches for $M_k$-step hold inputs that robustly guarantee open-loop constraint satisfaction for the next $M_k$ time steps and inclusion in $\set{X}_{f,k}$ at the next update step. 
% \begin{assumption}\label{asm:N_mod_M}
%     At every $\tau_k$, $N_k$ is a multiple of $M_k$.
% \end{assumption}
$M_k$-step hold MPC adapts the classic MPC formulation by constraining the input sequence of length $N_k$ (the MPC horizon) to satisfy $M_k$-step holds (i.e. only changing every $M_k$ time steps), where $N_k$ is a multiple of $M_k$. 

Building on a formulation introduced in~\cite{Schutz_MPC}, at each $\tau_k$ we solve the following optimal control problem (OCP):
\begin{subequations}\label{eqn:ocp}
\begin{align}
    \min_{\mathbf{\bar{u}}}~ & \|\Delta \bar{x}_{\tau_k+N_k}\|^2_P + \sum_{i=0}^{N_k-1} \big(\|\Delta \bar{x}_{\tau_k+i}\|^2_Q + \|\Delta \bar{u}_{\tau_k+i}\|^2_R\big)  \label{eqn:ocp_cost} \\
    \text{s.t.}~ & \bar{x}_{\tau_k+i+1\mid\tau_k} = A\bar{x}_{\tau_k+i\mid\tau_k} + B\bar{u}_{\tau_k+i\mid\tau_k},~ \forall i \in \mathbb{Z}_{0:N_k-1} \label{eqn:ocp_dyn} \\
    & \bar{x}_{\tau_k+i\mid\tau_k} \in \set{X} \ominus \set{E}_i,~ \forall i \in \mathbb{Z}_{1:M_k-1} \label{eqn:ocp_tight_X} \\
    & \bar{x}_{\tau_k+M_k\mid\tau_k} \in \set{X}_{f,k} \ominus \set{E}_{M_k} \label{eqn:ocp_term_K} \\
    & \bar{x}_{\tau_k+i\mid\tau_k} \in \set{X},~ \forall i \in \mathbb{Z}_{M_k+1:N_k} \label{eqn:ocp_nom_X}\\
    & \bar{x}_{\tau_k\mid\tau_k} = x_{\tau_k} \label{eqn:ocp_init} \\
    & \bar{u}_{\tau_k+i\mid\tau_k} = \bar{u}_{\tau_k+M_k\lfloor i/M_k \rfloor \mid \tau_k} \in \set{U},~ \forall i \in \mathbb{Z}_{0:N_k-1} \label{eqn:ocp_hold}
\end{align}
\end{subequations}
where $\{\bar{x}_{\tau_k+i\mid\tau_k},\bar{u}_{\tau_k+i\mid\tau_k}\}$ denote the predicted nominal states and inputs, and $\mathbf{\bar{u}} = (\bar{u}_{\tau_k+i\mid\tau_k})_{i=0}^{N_k-1}$ is the predicted nominal input sequence. 
Constraints~\eqref{eqn:ocp_tight_X}-\eqref{eqn:ocp_term_K} guarantee robust open-loop state constraint satisfaction for the first $M_k$ steps and inclusion in the target set $\set{X}_{f,k}$ at $\tau_{k}+M_k=\tau_{k+1}$. The target set $\set{X}_{f,k}$ must be chosen to guarantee recursive feasibility and steer the system between regions of the state space where different $M$-step holds are safe.
Robustification is achieved by tightening the constraints using the $i$-step propagated uncertainty set, 
\begin{equation}\label{eqn:Ek}
    \set{E}_i = \bigoplus_{j=0}^{i-1} A^{i-1-j}E \set{W}_{j}.
\end{equation}
Formulating~\eqref{eqn:ocp}-\eqref{eqn:Ek} with nominal dynamics and tightened constraints avoids the computational burden of vertex enumeration~\cite{Borrelli_MPC_book} while preserving robust constraint satisfaction for the uncertain dynamics~\eqref{eqn:dyn_const}~{\cite[Thm.~1]{Schutz_MPC}}.
Since this only relies on the first $M_k$ steps, the tail constraint~\eqref{eqn:ocp_nom_X} serves to bound the predicted nominal cost while avoiding the conservatism of full-horizon robustification. 
Finally, constraint~\eqref{eqn:ocp_hold} enforces input bounds and the $M_k$-step hold over the full $N_k$-step horizon. The objective penalizes $\Delta\bar{x}_{\tau_k+i}=\bar{x}_{\tau_k+i\mid\tau_k}-x^\mathrm{ref}_{\tau_k+i}$ and $\Delta\bar{u}_{\tau_k+i}=\bar{u}_{\tau_k+i\mid\tau_k}-u^\mathrm{ref}_{\tau_k+i}$, deviations from a (freely chosen, possibly unsafe) tracking reference $\{x^\mathrm{ref}_t,u^\mathrm{ref}_t\}$, weighted by $P,Q,R\succeq0$.

We note that at each input update time $\tau_k$, this OCP~\eqref{eqn:ocp}-\eqref{eqn:Ek} is parameterized by a terminal set $\mathcal{X}_{f,k}$, horizon $N_k$, and hold-length $M_k$. 
Once a feasible input sequence is found, the MPC applies the first optimal input for $M_k$ time steps: 
\begin{equation}
    u_{\tau_k} = \pi^{\mathrm{MPC}}(x_{\tau_k}, M_k, N_k, \set{X}_{f,k}) = \bar{u}^\star_{\tau_k\mid\tau_k}.\label{eqn:mpc_pol}
\end{equation}
This input is held constant until the next update at $\tau_{k+1}$. Thus, the MPC applies $u_t=\bar{u}^\star_{\tau_k\mid\tau_k}$ for $\tau_k\leq t < \tau_{k+1}$.

\subsection{Adaptive-Sampling Control Framework}

Adaptive-sampling control considers the online adaptation of the control input update rate $M_k$. We assume that a high-level reasoner provides reference input hold lengths for the MPC controller~\eqref{eqn:mpc_pol}, suggesting how many time steps in the future the OCP~\eqref{eqn:ocp}-\eqref{eqn:Ek} be resolved (Fig.~\ref{fig:reference}). The value of the reference hold at the update time $\tau_k$ is denoted $M_k^{\rm{ref}}$. %Design of the reasoner is beyond the scope of this paper. 
\begin{assumption}\label{asm:m_list}
    The high-level reasoner selects reference hold values $M_k^{\rm{ref}}$ from a predetermined, finite set $\mathbb{M}$. 
\end{assumption}

\begin{figure}
    \centering
    \includegraphics[width=1\linewidth]{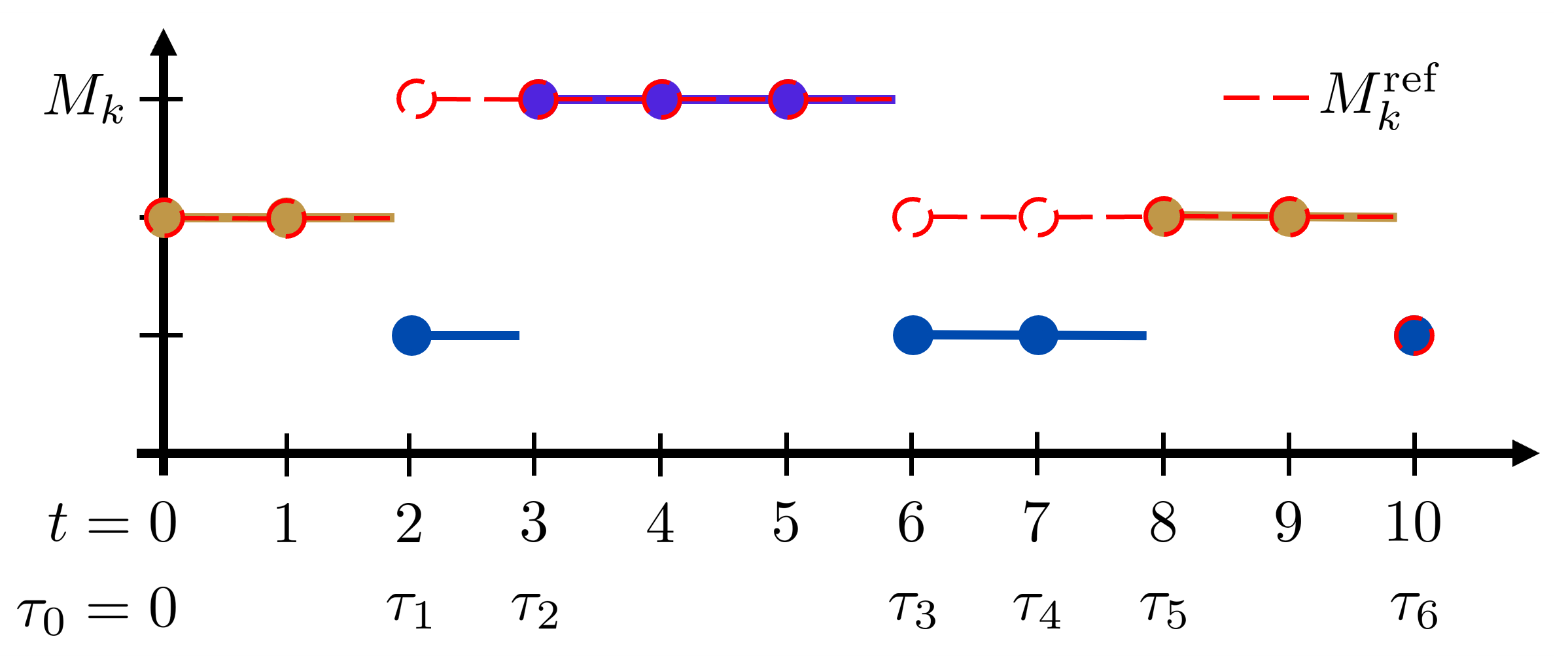}
    \caption{Input hold $M_k$ selected to track the reference hold $M^{\mathrm{ref}}_k$.}
    \label{fig:reference}
\end{figure}

Based on the current system state $x_{\tau_k}$, switching the length of the input hold from $M_{k-1}$ to $M_k^{\rm{ref}}$ may not be immediately possible at time $\tau_k$. For example, if the system is close to a state constraint boundary, it may be unsafe to lower the control update rate until the system has been ``steered" to a new state from which it is safe to apply $M_k^{\rm{ref}}$-step hold MPC. This steering can be implemented via appropriate choices of the OCP~\eqref{eqn:ocp}-\eqref{eqn:Ek} parameters $M_k$, $N_k$, and $\mathcal{X}_{f,k}$ in the MPC policy~\eqref{eqn:mpc_pol} (see Fig.~\ref{fig:architecture}). 
The online selection of these parameters is the focus of this paper. The parameter selection algorithm must be designed so that the resulting robust adaptive-sampling MPC policy:
\begin{enumerate}[label=(\textit{\roman*})]
    \item guarantees robust state and input constraint satisfaction at all time steps $t$;
    \item is recursively feasible at every $\tau_k$; and
    \item converges, under constant $M_{k}^{\rm{ref}}$, to an $M_{k}^{\rm{ref}}$-step hold policy in finite time. 
\end{enumerate}

Offline, we leverage set-theoretic tools from~\cite{Schutz_MSH,Schutz_MPC} to precompute sequences of $\{M,N,\set{X}_f\}$ that allow for safe transitions between pairs of hold values in $\mathbb{M}$.
%We do this by constructing modified invariant sets for different hold lengths and then calculating possible transition sets between them. To address~\textbf{\ref{req:no_future}}, we guarantee transitions may be safely interrupted at an input update step if the reference hold has changed. 
We bound the worst-case number of time steps required to transition and construct a transition graph weighted by a chosen transition cost. These steps are outlined in Sec.~\ref{sec:offline}.

Online, if it is safe at input update $k$ to apply an input hold of length $M_k^{\rm{ref}}$, we set $M_k = M_k^{\rm{ref}}$ (with $N_k$ and $\mathcal{X}_{f,k}$ chosen as detailed in Sec.~\ref{sec:online}). If it is not safe, the parameter selection algorithm uses the transition graph to select a \textit{sequence} of OCP parameter tuples $\{M_k, N_k, \mathcal{X}_{f,k}\}$ that, when implemented, robustly steer the system to a state from which an input hold of length $M_k^{\rm{ref}}$ can be applied. We show that our online selection of $\set{X}_{f,k}$ guarantees recursive feasibility of~\eqref{eqn:ocp}-\eqref{eqn:Ek} at every $\tau_k$, and thus the closed loop system robustly satisfies constraints at all $t$. 
% This sequence determines $\{M_k,N_k,\set{X}_{f,k}\}$, and we solve the corresponding OCP~\eqref{eqn:ocp}-\eqref{eqn:Ek}. 
% This is detailed in Sec.~\ref{sec:online}. 
% If it is safe at update $k$ to apply an input hold of length $M_k^{\rm{ref}}$, we set $M_k = M_k^{\rm{ref}}$ (with $N_k$ and $\mathcal{X}_{f,k}$ chosen as detailed in Sec.~\ref{sec:online}). If it is not safe, the algorithm calculates 

In Sec.~\ref{sec:cc}, we demonstrate the utility of our architecture in a cruise control example.

\begin{figure}
    \centering
    \includegraphics[width=1\linewidth]{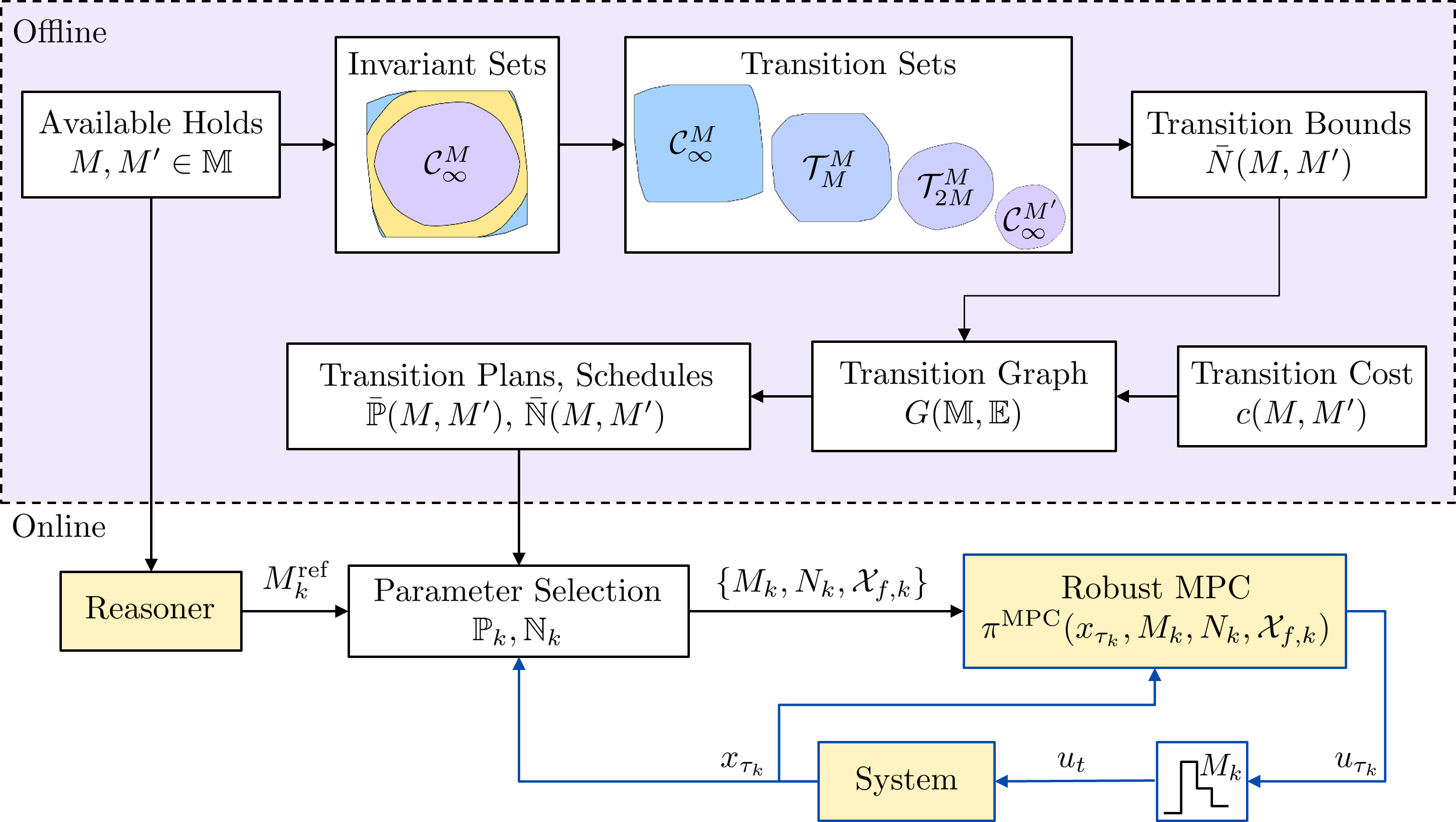}
    \caption{System~\eqref{eqn:dyn_const} in closed-loop with a reasoner and MPC~\eqref{eqn:mpc_pol}. Offline, Alg.~\ref{alg:offline} computes transition plans and schedules between all input holds. Online, Alg.~\ref{alg:online} selects parameters for~\eqref{eqn:ocp}-\eqref{eqn:Ek} to track the reference hold.}
    \label{fig:architecture}
\end{figure}

\section{Offline: Transition Planning}\label{sec:offline}
%This needed more storytelling, so I gave it a shot
Before beginning the control task, we utilize tools from robust $M$-step hold control invariance to calculate sets of states which the system can safely reach under the input holds in $\mathbb{M}$. We connect these into a weighted ``transition graph" that maps how sequences of different input holds can be applied to steer the system between these sets (and into a set from which the desired reference hold is safe to apply). These steps are consolidated in Alg.~\ref{alg:offline}.
%This transition graph will later be used online to prescribes tuples of $\{M,N,\set{X}_{f}\}$ based on $x_{\tau_k}$, $M_k$, and $M_k^{\rm{ref}}$.
% , M^\prime\in\mathbb{M}$ (Alg.~\ref{alg:offline}).

\begin{algorithm}[t] 
\caption{Offline Transition Planning}\label{alg:offline}
\textbf{Input:} System~\eqref{eqn:dyn_const}, available holds $\mathbb{M}$\\
\textbf{Output:} Transition sets, plans, and schedules; pruned $\mathbb{M}$
\begin{algorithmic}[1]
\State Compute $\set{C}^M_\infty$ for all $M\in\mathbb{M}$ (Defs.~\ref{def:msh_ctrl_inv}-\ref{def:msh_pre_set},~\cite[Alg. 2]{Schutz_MSH}).\label{alg:offline_C}
\State Remove any $M$ from $\mathbb{M}$ for which $\set{C}^M_\infty$ is empty.\label{alg:offline_M}
\State Compute transition sets $\set{T}^M_i(\set{C}^{M^\prime}_\infty)$ (Def.~\ref{def:msh_trans_set}) and transition bounds $\bar{N}(M,M^\prime)$~\eqref{eqn:N_bar} between all $M,M^\prime\in\mathbb{M}$.\label{alg:offline_N}
% \State Remove any $M$ from $\mathbb{M}$ that cannot be transitioned into.\label{alg:offline_M2}
\State Tighten transition sets for use in OCP constraint~\eqref{eqn:ocp_term_K}.\label{alg:offline_T}
\State Construct transition graph $G(\mathbb{M},\mathbb{E})$ from transition bounds; verify strong connectivity by Rmk.~\ref{rmk:strong_connect}.\label{alg:offline_G}
\State Assign edge cost $c:\mathbb{M}\times\mathbb{M}\mapsto\mathbb{R}_{\geq 0}$. \label{alg:offline_c}
\State Compute optimal transition plans $\bar{\mathbb{P}}(M,M^\prime)$ and worst-case schedules $\bar{\mathbb{N}}(M,M^\prime)$ between all $M,M^\prime\in\mathbb{M}$.\label{alg:offline_P}
\end{algorithmic}
\end{algorithm}

\subsection{Robust \textit{M}-Step Hold Invariant and Transition Sets}\label{subsec:inv}

We compute a set of states for each available $M$-step hold that guarantees robust constraint satisfaction and recursive feasibility when used as $\set{X}_f$.%, and sequences of sets that also guarantee transitions between hold values when used as $\set{X}_f$.
Robust $M$-step hold control invariance~\cite{Schutz_MSH,Schutz_MPC} extends traditional invariance to systems under a fixed $M$-step hold. %(where $M_k=M$ for all $k\in\mathbb{Z}_{\geq 0}$).

\begin{definition}\label{def:msh_ctrl_inv}
A set $\set{C}^M\subseteq{\set{X}}$ is \textit{\textbf{robust $M$-step hold control invariant}} for system~\eqref{eqn:dyn} with a fixed $M$-step hold and subject to constraints~\eqref{eqn:const} if
\begin{align*}
    x_{t} \in \set{C}^M \Rightarrow \exists~ &u_{t} \in \set{U} ~\text{s.t.}~ \forall w_{t+i} \in \set{W}_{i}, \forall i \in \mathbb{Z}_{0:M-1}, \\
    &x_{t+i+1} = A x_{t+i} + B u_{t} + E w_{t+i} \in \set{X}, \\
    &x_{t+M} \in \set{C}^M.
\end{align*}   
\end{definition}

From any $x_t\in\set{C}^M$, the system can be robustly kept within $\set{X}$ at all time steps, and returned to $\set{C}^M$ every $M$ time steps, under $M$-step hold inputs. An $M$-step hold MPC solving~\eqref{eqn:ocp}-\eqref{eqn:Ek} with constant hold $M$ and target set $\set{C}^M$ is recursively feasible and guarantees robust constraint satisfaction at all time steps. To maximize the region of attraction for the resulting MPC controller, we choose the largest possible target set $\set{C}^M$.

The \textit{\textbf{maximal robust $M$-step hold control invariant set}} $\set{C}^M_\infty$ is the robust $M$-step hold control invariant set containing all $\set{C}^M\subseteq\set{X}$, defining the largest set of states where an $M$-step hold is safe~\cite{Schutz_MSH, Schutz_MPC}. This set can be computed using a fixed-point algorithm with the robust $M$-step hold precursor set.

\begin{definition} \label{def:msh_pre_set}
The \textbf{\textit{robust $M$-step hold precursor set}} $\mathrm{Pre}^M(\set{S})$ to target set $\set{S}\subseteq\set{X}$ for system~\eqref{eqn:dyn} with a fixed $M$-step hold and subject to constraints~\eqref{eqn:const} is
\begin{align*}
    \mathrm{Pre}^M(\set{S}) = \big\{ x_0 \mid ~&\exists~u_0 \in \set{U} ~\text{s.t.}~\forall w_i \in \set{W}_{i}, \forall i \in \mathbb{Z}_{0:M-1}, \\
    &x_{i+1} = A x_i + B u_0 + E w_i \in \set{X}, \\
    &x_M \in \set{S} \big\}.
\end{align*}
\end{definition}
% For all $x_0\in \mathrm{Pre}^M(\set{S})$, there is an $M$-step held input $u_0$ that robustly guarantees $x_1,\dots,x_{M-1}\in\set{X}$ and $x_M\in\set{S}$. 
Calculation details for $\mathrm{Pre}^M(\set{S})$ are given in~\cite{Schutz_MPC}.

The offline Alg.~\ref{alg:offline} first computes $\set{C}^M_\infty$ for all $M\in\mathbb{M}$ (Line~\ref{alg:offline_C}). We then prune $\mathbb{M}$ to the values that yield non-empty $\set{C}^M_\infty$, ensuring a choice of $\set{X}_{f}$ for steady-state operation under each possible hold (Line~\ref{alg:offline_M}).

While an $M$-step hold is safe to use at every $x_t\in\set{C}^M_\infty$, a different hold $M^\prime\in\mathbb{M}$ is not necessarily safe. If $x_t\notin\set{C}^{M^\prime}_\infty$, we must first steer the state into $\set{C}^{M^\prime}_\infty$ under an $M$-step hold using a sequence of appropriately constructed target sets.

\begin{definition} \label{def:msh_trans_set}
The \textbf{\textit{robust $N$-step $M$-step hold transition set}} $\set{T}^M_N(\set{S})$ to a target $\set{S}\subseteq\set{X}$ is defined recursively:
\begin{align*}
    \set{T}^M_0(\set{S})&=\set{S}\cap\set{C}^M_\infty,\\
    \set{T}^M_{i}(\set{S})&=\mathrm{Pre}^M\big(\set{T}^M_{i-M}(\set{S})\big)\cap\set{C}^M_\infty,~
     \forall i\in \{M,2M,\dots,N\}   
\end{align*}
\end{definition}

All $x\in\set{T}^M_N(\set{S})$ may be robustly steered to $\set{S}$ in $N$ time steps through $\frac{N}{M}$ $M$-step holds while remaining in $\set{X}$ at all time steps and returning to $\set{C}^M_\infty$ every $M$ steps.

Transition sets to $\set{C}^{M^\prime}_\infty$
will be used as a sequence of OCP targets $\set{X}_f$ to steer within $\set{C}^M_\infty$ to states where it is safe to apply $M^\prime$. Intersecting with $\set{C}^M_\infty$ in Def.~\ref{def:msh_trans_set} keeps such trajectories in $\set{C}^M_\infty$ at every input update, so the target $\set{X}_f$ may revert to $\set{C}^M_\infty$ if steering to $\set{C}^{M^\prime}_\infty$ is no longer desired. This grants agility when the reference hold changes frequently. % as by~\textbf{\ref{req:no_future}} the controller cannot plan to meet future references. 

\begin{remark}
    The sets in Def.~\ref{def:msh_trans_set} resemble the ``robust $M$-step hold controllable sets" of~\cite[Def. 5]{Schutz_MPC}, which intersect with $\set{X}$ instead of $\set{C}^M_\infty$ and may therefore leave $\set{C}^M_\infty$, rendering a steering maneuver through them unsafe to interrupt. Def.~\ref{def:msh_trans_set} trades set size for this operational flexibility, as intersecting with $\set{C}^M_\infty$ is more restrictive than with $\set{X}$. %Both adhere to~\textbf{\ref{req:no_inter}}, as neither allows an input hold itself to be safely interrupted.
\end{remark} 

\begin{remark}
    Set invariance under periodic re-entry has been defined in other contexts as $k$-recurrence~\cite{Shen_k-recurrence} and $p$-invariance~\cite{Olaru_p-invariance}, but these notions permit intermittent constraint violation.
\end{remark}

\subsection{Transition Graph}\label{subsec:trans}

\subsubsection{Bounding Direct Transitions}
Definition~\ref{def:msh_trans_set} can be used to find the worst-case number of time steps an $M$-step hold controller requires to robustly transition \textit{every} $x\in\set{C}^M_\infty$ to a target $\set{C}^{M^\prime}_\infty$:
\begin{equation}\label{eqn:N_bar}
    \bar{N}(M,M^\prime)=\min\big\{N\in\{0,M,\dots\}\mid\set{C}^M_\infty\subseteq\set{T}^M_N(\set{C}^{M^\prime}_\infty)\big\}.
\end{equation}

We find $\bar{N}(M,M^\prime)$ for all combinations $M,M^\prime\in\mathbb{M}$ and the associated transition sets $\big\{\set{T}^{M}_i(\set{C}^{M^\prime}_\infty)\big\}_{i=0}^{\bar{N}}$ (Line~\ref{alg:offline_N}). If Eqn.~\eqref{eqn:N_bar} does not have a finite solution for a pair $M,M^\prime\in\mathbb{M}$, the transition is infeasible and assigned $\bar{N}(M,M^\prime) = \infty$. 
For all remaining transitions, we store each set tightened as $\set{T}^{M}_i(\set{C}^{M^\prime}_\infty)\ominus\set{E}_M$ to be retrieved online for use in~\eqref{eqn:ocp_term_K} (Line~\ref{alg:offline_T}). 

Note that while the number of hold pairs $(M,M^\prime)$ is combinatoric in the cardinality of $\mathbb{M}$, the number of transition sets to compute is reduced by a nesting property of maximal robust $M$-step hold control invariant sets.

\begin{theorem}[{\cite[Thm.~3]{Schutz_MPC}}]\label{thm:nest}
    For a system~\eqref{eqn:dyn_const} under Asm.~\ref{asm:w_nest} and $M,M^\prime\in\mathbb{Z}_{>0}$, if $M^\prime$ is a factor of $M$, then $\set{C}^M_\infty\subseteq\set{C}^{M^\prime}_\infty$.
    %the maximal robust $M$-step hold control invariant set is a subset of the maximal robust $M^\prime$-step hold control invariant set.
    % \begin{equation*}
    %     \big(\mathrm{mod}(M,M^\prime)=0\big) \Rightarrow \big(\set{C}^M_\infty\subseteq\set{C}^{M^\prime}_\infty\big).
    % \end{equation*}
\end{theorem}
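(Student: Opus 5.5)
The plan is to show directly that $\set{C}^M_\infty$ is contained in \emph{some} robust $M'$-step hold control invariant set $\set{S}\subseteq\set{X}$ (Def.~\ref{def:msh_ctrl_inv}). Maximality of $\set{C}^{M'}_\infty$ then gives $\set{C}^M_\infty\subseteq\set{S}\subseteq\set{C}^{M'}_\infty$. We cannot simply take $\set{S}=\set{C}^M_\infty$. Holding an input for only $M'$ steps from $x\in\set{C}^M_\infty$ keeps the state in $\set{X}$, but it need not return the state to $\set{C}^M_\infty$ after $M'$ steps. So we enlarge $\set{C}^M_\infty$ by the intermediate states visited at multiples of $M'$ along admissible $M$-step hold trajectories.

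Write $M=pM'$ with $p\in\mathbb{Z}_{>0}$. For each $x_0\in\set{C}^M_\infty$, fix one input $u(x_0)\in\set{U}$ that witnesses invariance in Def.~\ref{def:msh_ctrl_inv}. Define $\set{S}$ as the set of all states $x_{jM'}$, for $j\in\mathbb{Z}_{0:p-1}$, generated as follows:
\begin{itemize}
\item start at some $x_0\in\set{C}^M_\infty$;
\item hold $u(x_0)$ and apply $x_{i+1}=Ax_i+Bu(x_0)+Ew_i$;
\item use disturbances of the form $w_{lM'+r}\in\set{W}_r$ for $l\in\mathbb{Z}_{0:j-1}$ and $r\in\mathbb{Z}_{0:M'-1}$.
\end{itemize}
These are exactly the disturbances admissible under $M'$-step holds, since the uncertainty bound resets every $M'$ steps. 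Taking $j=0$ shows $\set{C}^M_\infty\subseteq\set{S}$. By Asm.~\ref{asm:w_nest}, $\set{W}_r\subseteq\set{W}_{lM'+r}$, so every such trajectory is also admissible for the $M$-step hold dynamics from $x_0$. Hence every element of $\set{S}$ lies in $\set{X}$: for $j\ge1$ it is an intermediate state of a robustly feasible $M$-hold trajectory, and for $j=0$ it lies in $\set{C}^M_\infty\subseteq\set{X}$.

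Next we verify invariance. Take $x\in\set{S}$ with a witness $(x_0,j,\text{disturbances})$, and choose the input $u(x_0)$ for the next $M'$-step hold. Any disturbances $w\in\set{W}_r$, $r\in\mathbb{Z}_{0:M'-1}$, extend the witness trajectory to steps $jM'+1,\dots,(j+1)M'$. By Asm.~\ref{asm:w_nest}, $\set{W}_r\subseteq\set{W}_{jM'+r}$, so the extended trajectory is still an admissible $M$-hold trajectory from $x_0$, and all these states lie in $\set{X}$. There are two cases for the state $x_{(j+1)M'}$:
\begin{itemize}
\item if $j+1<p$, it belongs to $\set{S}$ by definition;
\item if $j+1=p$, it is $x_M$ of the $M$-hold trajectory and so lies in $\set{C}^M_\infty\subseteq\set{S}$.
\end{itemize}
Thus $\set{S}$ is robust $M'$-step hold control invariant, which proves the claim.

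The main point needing care is that the disturbance bounds are indexed by time since the last update. The proof hinges on the inclusion $\set{W}_r\subseteq\set{W}_{lM'+r}$ from Asm.~\ref{asm:w_nest}, which is exactly what lets an $M'$-reset disturbance sequence be embedded in an $M$-hold one. A second subtlety is well-definedness, since a state $x$ may have several witnesses. This is harmless: for each $x\in\set{S}$ we fix any one witness, and the chosen input depends only on that witness, which suffices because Def.~\ref{def:msh_ctrl_inv} only requires existence of an input.
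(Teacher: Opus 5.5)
Your proof is correct. The paper does not prove this statement itself; it imports it from \cite[Thm.~3]{Schutz_MPC}. There is therefore no in-text argument to compare against, and your proposal stands as a complete, self-contained proof.

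Both load-bearing steps are sound:
\begin{enumerate}
\item Iterating Asm.~\ref{asm:w_nest} gives $\set{W}_r\subseteq\set{W}_{lM'+r}$. This makes every $M'$-reset disturbance sequence admissible for the $M$-hold witness input $u(x_0)$. Hence every state of the extended trajectory lies in $\set{X}$, and the state at step $pM'=M$ returns to $\set{C}^M_\infty$.
\item You correctly noted that $\set{C}^M_\infty$ itself need not be robust $M'$-step hold control invariant. This is why you first add the intermediate states at multiples of $M'$, and only then invoke maximality of $\set{C}^{M'}_\infty$.
\end{enumerate}

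A common alternative is a strategy-emulation argument. An $M$-hold policy is an $M'$-hold policy that re-applies the same input $p$ times, and $M'$-hold updates face smaller uncertainty sets. Turning that into a set inclusion, however, requires identifying $\set{C}^{M'}_\infty$ with the set of states that can be kept robustly in $\set{X}$ forever under $M'$-hold feedback. Your explicit set $\set{S}$ proves exactly the needed part of that identification directly from Def.~\ref{def:msh_ctrl_inv}, so nothing further is required.
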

% \begin{proof}
%     Proof in~\cite[Thm. 3]{Schutz_MPC}.
% \end{proof}

It follows from Thm.~\ref{thm:nest} that if $M^\prime$ is a factor of $M$, then the $0$-step transition set $\set{T}^{M}_0(\set{C}^{M^\prime}_\infty)=\set{C}^{M^\prime}_\infty\cap\set{C}^M_\infty=\set{C}^M_\infty$, implying $\bar{N}(M,M^\prime)=0$. Therefore, Line~\ref{alg:offline_N} need only consider combinations for which $M^\prime$ is not a factor of $M$.

\subsubsection{Planning Intermediate Transitions}
We build a directed transition graph $G(\mathbb{M},\mathbb{E})$ with nodes $\mathbb{M}$ and edges $\mathbb{E}$ for the remaining feasible transitions (Line~\ref{alg:offline_G}). % for which Eqn.~\eqref{eqn:N_bar} had a solution $\bar{N}(M,M^\prime) < \infty$. 
% If $\bar{N}(M,M^\prime)=\infty$ for all $M\in\mathbb{M}$, then $M^\prime$ cannot be safely transitioned into from any other hold, and it is removed from $\mathbb{M}$ (Alg.~\ref{alg:offline}, Line~\ref{alg:offline_M2}). Transitions between all remaining $M,M^\prime\in\mathbb{M}$ are possible, if indirectly. 
We use the transition graph to plan and schedule transitions between holds. 

Let a plan $\mathbb{P}(M^{(0)},M^{(L)}) = (M^{(0)}, M^{(1)}, \dots, M^{(L)})$ denote a sequence of $L+1$ holds $M^{(i)}\in\mathbb{M}$
%,~ i\in\mathbb{Z}_{0:L}$ 
that can be applied to transition from $M^{(0)}$ to $M^{(L)}$. The optimal plan $\bar{\mathbb{P}}$ between any two hold lengths is calculated as:
% \begin{align}
%      \bar{\mathbb{P}}(M,M^\prime) = \arg\min_{\mathbb{P},L\in\mathbb{Z}_{>0}}~&\sum_{i=0}^{L-1} c(M^{(i)}, M^{(i+1)})\\
%      \text{s.t.}~&M^{(0)}=M \nonumber\\
%      &M^{(L)}=M^\prime \nonumber\\
%      &(M^{({i})}, M^{(i+1)})\in\mathbb{E},~\forall i\in\mathbb{Z}_{0:L-1}.\nonumber
% \end{align} 
\begin{equation}\label{eqn:plan}
\begin{aligned}
    \bar{\mathbb{P}}(M,M^\prime) ={}&\arg\min_{\mathbb{P},L}~\sum_{i=0}^{L-1} c(M^{(i)},M^{(i+1)})\\
    \text{s.t.}~~&M^{(0)}=M,~~M^{(L)}=M^\prime,\\
    &(M^{(i)},M^{(i+1)})\in\mathbb{E},~\forall i\in\mathbb{Z}_{0:L-1},
\end{aligned}
\end{equation}
where the cost function $c: \mathbb{M} \times \mathbb{M} \to \mathbb{R}_{\geq 0}$ assigns a cost $c(M, M^\prime)$ to each edge, designed to balance relevant priorities such as the numbers of time steps or input updates required to transition to the desired hold (Line~\ref{alg:offline_c}).
Since the transition graph $G(\mathbb{M}, \mathbb{E})$, the transition bounds $\bar{N}(M,M^\prime)$, and the transition costs $c(M,M^\prime)$ are state-independent, we can precompute all optimal plans $\bar{\mathbb{P}}(M,M^\prime)$ offline, before beginning the control task (Line~\ref{alg:offline_P}). 

The stage of plan $\bar{\mathbb{P}}(M,M^\prime)$ applying $M^{(i)}$ requires at most $\bar{N}^{(i+1)}=\bar{N}(M^{(i)},M^{(i+1)})$ steps, so the entire plan is associated with a worst-case schedule $\bar{\mathbb{N}}(M,M^\prime)=(\bar{N}^{(1)},\dots,\bar{N}^{(L)})$ that provides the worst-case number of steps for which we must apply each planned hold to complete the transition.
%Section~\ref{sec:online} utilizes the identities $\bar{\mathbb{P}}(M,M)=(M,M)$ and $\bar{\mathbb{N}}(M,M)=(0)$ to unify the OCP parameter selection rules. 
% To complete the offline phase, we precompute the set tightenings required for OCP constraints~\eqref{eqn:ocp_tight_X}-\eqref{eqn:ocp_term_K}. 

\begin{remark}\label{rmk:factors}
    While not required, selecting $\mathbb{M}$ with common factors leverages Thm.~\ref{thm:nest} to reduce transition costs. For divisibility chains such as $\{1,2,4,8\}$ or $\{1,5,10,20\}$, transitions are only ever needed to increase the hold.
\end{remark}

\begin{remark}\label{rmk:strong_connect}
    To guarantee online transitions to the reference hold in~Sec.~\ref{sec:online}, we require $G(\mathbb{M},\mathbb{E})$ be \textit{strongly connected}, meaning all pairs of $M,M^\prime\in\mathbb{M}$ are mutually reachable, if indirectly. A sufficient condition for strong connectivity is that the greatest common factor $g$ of the entries of $\mathbb{M}$ satisfies $g\in\mathbb{M}$ and $\bar{N}(g,M^\prime)<\infty$ for all $M^\prime\in\mathbb{M}$, since $\bar{N}(M,g)=0$ for all $M$ by Thm.~\ref{thm:nest}. If $G(\mathbb{M},\mathbb{E})$ is not strongly connected, we prune $\mathbb{M}$ to a strongly connected component of $G$ computed by Tarjan's algorithm~\cite{Tarjan_SCC}.
\end{remark}

\section{Online: Parameter Selection Algorithm}\label{sec:online}
Online at each $\tau_k$, given the previously applied hold value $M_{k-1}$ and current reference $M_{k}^{\rm{ref}}$, the parameter selection algorithm retrieves the stored plan $\bar{\mathbb{P}}(M_{k-1}, M_{k}^{\rm{ref}})$ and worst-case schedule $\bar{\mathbb{N}}(M_{k-1}, M_{k}^{\rm{ref}})$ to select $\{M_k,N_k,\set{X}_{f,k}\}$ for the MPC policy~\eqref{eqn:mpc_pol}. We prove that these selection rules (Alg.~\ref{alg:online}) guarantee recursive feasibility, robust constraint satisfaction, and a finite-time transition to a step reference hold. 

\begin{algorithm}[t] 
\caption{Online Parameter Selection and MPC}\label{alg:online}
\textbf{Input:} $x_{\tau_k}$, $M^\mathrm{ref}_k$, Alg.~\ref{alg:offline} outputs\\
\textbf{Output:} $u_{\tau_k}$\\
\textbf{Persist:} hold plan $\mathbb{P}_k$, worst-case schedule $\mathbb{N}_k$
\begin{algorithmic}[1]
\State \textbf{if} $M^\mathrm{ref}_k\neq M^{(L)}$:\label{alg:online_if_start}
\State \hspace{\algorithmicindent} Overwrite $\mathbb{P}_k$, $\mathbb{N}_k$ with offline solutions $\bar{\mathbb{P}}$, $\bar{\mathbb{N}}$~\eqref{eqn:overwrite}
\State \textbf{end if}\label{alg:online_if_stop}
\State Advance $\mathbb{P}_k$ to latest safe hold for $x_{\tau_k}$~\eqref{eqn:ff}\label{alg:online_ff_P}
\State Select hold $M_k$ via~\eqref{eqn:m_k}\label{alg:online_m_k}
\State Advance $\mathbb{N}_k$ to latest safe transition set for $x_{\tau_k}$~\eqref{eqn:eta_k}\label{alg:online_ff_N}
\State Select target set $\set{X}_{f,k}$ via~\eqref{eqn:xf_k}\label{alg:online_xf_k}
\State Choose horizon $N_k$ as any multiple of $M_k$\label{alg:online_n_k}
\State Solve OCP~\eqref{eqn:ocp}-\eqref{eqn:Ek} with $\{M_k,N_k,\set{X}_{f,k}\}$ for input $u_{\tau_k}$\label{alg:online_ocp}
\State $\tau_{k+1} \leftarrow \tau_{k}+M_k$\label{alg:online_repeat}
\end{algorithmic}
\end{algorithm}

\subsection{Online Plan Processing}\label{subsec:process}
% The parameter selection algorithm maintains two variables online: a current plan of $L+1$ holds $\mathbb{P}_k=(M^{(0)},\dots,M^{(L)})$ and the corresponding schedule $\mathbb{N}_k=(N^{(1)},\dots,N^{(L)})$, initialized as $\mathbb{P}_{-1}(M_{-1},M_{-1})=(M_{-1},M_{-1})$ and $\mathbb{N}_{-1}(M_{-1},M_{-1})=(0)$. 

% At every $\tau_k$, if the reference hold has not changed since $\tau_{k-1}$ we reuse the previous plan and schedule, otherwise we retrieve the offline-calculated optimal plan for the desired transition and follow that plan instead (Alg.~\ref{alg:online}, Lines~\ref{alg:online_if_start}-\ref{alg:online_if_stop}):
% \begin{subequations}\label{eqn:overwrite}
% \begin{align}
%     \mathbb{P}_k&\leftarrow \mathbb{P}_{k-1} ~\mathrm{if}~M^\mathrm{ref}_k=M^{(L)}~\mathrm{else}~\bar{\mathbb{P}}(M_{k-1},M^\mathrm{ref}_k),\\
%     \mathbb{N}_k&\leftarrow \mathbb{N}_{k-1} ~\mathrm{if}~M^\mathrm{ref}_k=M^{(L)}~\mathrm{else}~\bar{\mathbb{N}}(M_{k-1},M^\mathrm{ref}_k)
% \end{align}
% \end{subequations}
% This processing ensures $M^{(0)}=M_{k-1}$ and $ M^{(L)}=M^\mathrm{ref}_k$.

The parameter selection algorithm maintains two variables online. The \textit{plan}
$\mathbb{P}_k=(M^{(0)},\dots,M^{(L)})$ lists the $L+1$ holds to be applied in sequence, from the current
hold $M^{(0)}$ to the reference $M^{(L)}=M^\mathrm{ref}_k$, and the \textit{schedule}
$\mathbb{N}_k=(N^{(1)},\dots,N^{(L)})$ bounds the time steps remaining in each stage, where $N^{(i)}$ is the
number of steps left before the hold may advance from $M^{(i-1)}$ to $M^{(i)}$. Both are initialized at
$\tau_0$ as $\mathbb{P}_{-1}=(M_{-1},M_{-1})$ and $\mathbb{N}_{-1}=(0)$.

At every $\tau_k$, if the reference hold has not changed since $\tau_{k-1}$ we reuse the previous plan and
schedule; otherwise the stored plan is irrelevant and we retrieve the offline optimal solution for the
desired transition (Alg.~\ref{alg:online}, Lines~\ref{alg:online_if_start}-\ref{alg:online_if_stop}):
\begin{subequations}\label{eqn:overwrite}
\begin{align}
    \mathbb{P}_k&\leftarrow\begin{cases}
        \mathbb{P}_{k-1}, & M^\mathrm{ref}_k=M^{(L)}\\
        \bar{\mathbb{P}}(M_{k-1},M^\mathrm{ref}_k), & \text{otherwise}
    \end{cases}\\
    \mathbb{N}_k&\leftarrow\begin{cases}
        \mathbb{N}_{k-1}, & M^\mathrm{ref}_k=M^{(L)}\\
        \bar{\mathbb{N}}(M_{k-1},M^\mathrm{ref}_k), & \text{otherwise}
    \end{cases}
\end{align}
\end{subequations}
In either case, $M^{(0)}=M_{k-1}$ and $M^{(L)}=M^\mathrm{ref}_k$: the first holds because $\mathbb{P}_{k-1}$ was
advanced to begin with the applied hold $M_{k-1}$ at the previous update (see~\eqref{eqn:ff}).

The offline plan $\bar{\mathbb{P}}(M_{k-1},M^\mathrm{ref}_k)$ and schedule $\bar{\mathbb{N}}(M_{k-1},M^\mathrm{ref}_k)$ guarantee a safe transition to $M^\mathrm{ref}_k$, but the state known at each input update may shorten it. We use $x_{\tau_k}$ to advance to the latest safe hold in $\mathbb{P}$ (Line~\ref{alg:online_ff_P}). A planned hold is immediately safe if \textit{(i)} it is a factor of the previously applied hold (by Thm.~\ref{thm:nest}) or \textit{(ii)} $x_{\tau_k}$ is by chance already inside the corresponding invariant set. Both conditions are captured by a single operation:
\begin{subequations}\label{eqn:ff}
\begin{align}
    i^\star&=\max\{i\in\mathbb{Z}_{0:L}\mid x_{\tau_k}\in\set{C}^{M^{(i)}}_\infty\},\label{eqn:ff_i}\\
    \mathbb{P}_k&\leftarrow(M^{(i^\star)},\dots,M^{(L)}),\label{eqn:ff_p}\\
    \mathbb{N}_k&\leftarrow(N^{(i^\star+1)},\dots,N^{(L)}).\label{eqn:ff_n}
\end{align}
\end{subequations}
As all polytopic $\set{C}^M_\infty$ are precomputed,~\eqref{eqn:ff_i} only requires inequality tests. The indices of $\mathbb{P}_k$ and $\mathbb{N}_k$ are reset afterwards. If $L=0$, then $M^{(0)}=M^\mathrm{ref}_k$ and $x_{\tau_k}\in\set{C}^{M^\mathrm{ref}_k}_\infty$, so we have reached the reference hold. As future references are unknown, we append a steady-state tail:
\begin{align*}
    \mathbb{P}_k&\leftarrow (M^\mathrm{ref}_k,M^\mathrm{ref}_k),\\
    \mathbb{N}_k&\leftarrow (0).
\end{align*}

\subsection{OCP Parameter Selection}\label{subsec:param}
Given a processed plan $\mathbb{P}_k$ and schedule $\mathbb{N}_k$, we now define selection rules for the OCP parameters $\{M_k,N_k,\set{X}_{f,k}\}$. 

The first entry of $\mathbb{P}_k$ determines the applied hold (Line~\ref{alg:online_m_k}):
\begin{equation}\label{eqn:m_k}
    M_k=M^{(0)}.
\end{equation}

The first entry of $\mathbb{N}_k$ determines $\eta_k$, the number of time steps remaining until applying the next hold $M^{(1)}$ is guaranteed to be safe. The offline bound gives $\eta_k\leq N^{(1)}$, but with $x_{\tau_k}$ now available we advance through the $M_k$-step hold transition sets (Line~\ref{alg:online_ff_N}):
\begin{equation}\label{eqn:eta_k}
    \eta_k= \min\big\{\eta\in\{0,M_k,\dots,N^{(1)}\}\mid x_{\tau_k}\in\set{T}^{M_k}_{\eta}(\set{C}^{M^{(1)}}_\infty)\big\}.
\end{equation}
As all transition sets are precomputed,~\eqref{eqn:eta_k} again requires only inequality tests. We then overwrite $N^{(1)}$ with $\eta^+_k=\max(0,\eta_k-M_k)$, the steps remaining after the current hold, which determines the target set (Line~\ref{alg:online_xf_k}):
\begin{equation}\label{eqn:xf_k}
    \set{X}_{f,k}=\set{T}^{M_k}_{\eta^+_k}(\set{C}^{M^{(1)}}_\infty).
\end{equation}

Equation~\eqref{eqn:xf_k} requires no online computation, as we simply retrieve the corresponding pre-tightened set. The use of $\max(0,\eta_k-M_k)$ unifies the selection logic for transient and steady-state operation. In the steady-state with $M^\mathrm{ref}_k=M_{k-1}$, rules~\eqref{eqn:m_k}-\eqref{eqn:xf_k} return $M_k=M^\mathrm{ref}_k$, $\eta_k=0$, and $\set{X}_{f,k}=\set{T}^{M^\mathrm{ref}_k}_{0}(\set{C}^{M^\mathrm{ref}_k}_\infty)=\set{C}^{M^\mathrm{ref}_k}_\infty$.

Finally, the horizon $N_k$ is selected as any multiple of $M_k$ (Line~\ref{alg:online_n_k}). It need not equal $\eta_k$, as entry into $\set{C}^{M^{(1)}}_\infty$ is guaranteed by selecting shrinking $\set{X}_{f,k}$ and \textit{not} by shrinking $N_k$ itself. Still, $N_k=\max(\eta_k,M_k)$ may be practical when transitioning between holds, as it prevents the OCP from predicting past the switch step under the previous hold.

\subsection{Example: Plan Processing and Parameter Selection}
Consider $M_{k-1}=6$ with $\mathbb{P}_k=(6,4,8)$ and $\mathbb{N}_k=(6,8)$, which uses an intermediate hold because $\bar{N}(6,8)=\infty$. Section~\ref{subsec:feas} shows that $x_{\tau_k}\in\set{C}^6_\infty$ by construction, and also let $x_{\tau_k}\in\set{C}^4_\infty$ by chance. Equation~\eqref{eqn:ff} yields $i^\star=1$, $\mathbb{P}_k=(4,8)$ and $\mathbb{N}_k=(8)$, so $M_k=4$. Also let $x_{\tau_k}\in\set{T}^4_{4}(\set{C}^8_\infty)$, so Eqn.~\eqref{eqn:eta_k} yields $\eta_k=4$ and $\set{X}_{f,k}=\set{T}^4_{0}(\set{C}^8_\infty)=\set{C}^8_\infty$. The state reaches $\set{C}^8_\infty$ at $\tau_k+4$ (one input update), rather than the worst-case $\tau_k+14$ (three input updates).
\begin{remark}
    The state-dependent advancements in Alg.~\ref{alg:online}, Lines~\ref{alg:online_ff_P},~\ref{alg:online_ff_N} are optional. Completion of the offline optimal plan $\bar{\mathbb{P}}(M_{k-1},M^\mathrm{ref}_k)$ is guaranteed on the worst-case schedule $\bar{\mathbb{N}}(M_{k-1},M^\mathrm{ref}_k)$; Eqns.~\eqref{eqn:ff},~\eqref{eqn:eta_k} can only shorten this. 
\end{remark}

\subsection{Proofs: Safe Input Hold Tracking}\label{subsec:feas}
We now show that Algs.~\ref{alg:offline}-\ref{alg:online} achieve the original goals of \textit{(i)} robust constraint satisfaction for all $t$, \textit{(ii)} recursive feasibility at every $\tau_k$, and \textit{(iii)} a finite-time transition to $M^\mathrm{ref}_k$. Only \textit{(iii)} requires a step reference hold; \textit{(i)} and \textit{(ii)} are guaranteed under general reference sequences.
% In the proof of \textit{(iii)}, we assume a step reference hold such that $M^\mathrm{ref}_{k-1}\neq M^\mathrm{ref}_k$ but $M^\mathrm{ref}_{k+i}=M^\mathrm{ref}_{k}$ for all $i\in\mathbb{Z}_{>0}$. However, we show that \textit{(i)} and \textit{(ii)} are still guaranteed under general reference holds. 

\begin{theorem}\label{thm:mpc_guarantees}
    Consider system~\eqref{eqn:dyn_const} with $\mathbb{M}$ processed by Alg.~\ref{alg:offline}, in closed loop with MPC~\eqref{eqn:mpc_pol}, with $\{M_k,N_k,\set{X}_{f,k}\}$ selected at each $\tau_k$ by Alg.~\ref{alg:online}, initialized at $\tau_0$ with $x_0\in\set{C}^{M_{-1}}_\infty$, $\mathbb{P}_{-1}=(M_{-1},M_{-1})$, and $\mathbb{N}_{-1}=(0)$ for some $M_{-1}\in\mathbb{M}$. Under Asms.~\ref{asm:w_nest}-\ref{asm:m_list}, the following properties hold:
    \begin{enumerate}[label=(\textit{\roman*})]
    \item the system robustly satisfies $x_t\in\set{X}$ and $u_t\in\set{U}$ for all $t\geq0$;
    \item the OCP~\eqref{eqn:ocp}-\eqref{eqn:Ek} is recursively feasible at every input update step $\tau_k$; and 
    \item for a step change in reference hold, with $M^\mathrm{ref}_k\neq M^\mathrm{ref}_{k-1}$ and $M^\mathrm{ref}_{k+i}=M^\mathrm{ref}_k$ for all $i\in\mathbb{Z}_{>0}$, there exists some $j\in\mathbb{Z}_{\geq 0}$ such that the applied hold $M_{k+j}=M^\mathrm{ref}_k$.
    \end{enumerate}
\end{theorem}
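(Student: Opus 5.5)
The proof will go by induction over input updates $\tau_k$, using the invariant
\[
x_{\tau_k}\in\set{T}^{M_{k-1}}_{\eta^+_{k-1}}\big(\set{C}^{M^{(1)}_{k-1}}_\infty\big)\subseteq\set{C}^{M_{k-1}}_\infty .
\]
Here $M^{(1)}_{k-1}$ denotes the second entry of the processed plan at update $k-1$. The base case is the initialization: $x_0\in\set{C}^{M_{-1}}_\infty$ with $\eta^+_{-1}=0$ and $\mathbb{P}_{-1}=(M_{-1},M_{-1})$, so $x_0\in\set{T}^{M_{-1}}_0(\set{C}^{M_{-1}}_\infty)=\set{C}^{M_{-1}}_\infty$.

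\textbf{Recursive feasibility (ii).} Assume the invariant at $\tau_k$. After the overwrite~\eqref{eqn:overwrite}, $M^{(0)}=M_{k-1}$, and $x_{\tau_k}\in\set{C}^{M_{k-1}}_\infty$. Hence the set in~\eqref{eqn:ff_i} contains $i=0$ and $i^\star$ is well defined. By construction $x_{\tau_k}\in\set{C}^{M_k}_\infty$ for $M_k=M^{(i^\star)}$.

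Next I need $\eta_k$ in~\eqref{eqn:eta_k} to be well defined, i.e. $x_{\tau_k}\in\set{T}^{M_k}_{N^{(1)}}(\set{C}^{M^{(1)}}_\infty)$ for some admissible value. This splits into two cases.
\begin{itemize}
\item If the plan was freshly retrieved or advanced, $N^{(1)}=\bar N(M_k,M^{(1)})$. Then~\eqref{eqn:N_bar} gives $\set{C}^{M_k}_\infty\subseteq\set{T}^{M_k}_{N^{(1)}}$, so the condition holds.
\item If the plan was kept and $i^\star=0$, the stored $N^{(1)}$ equals $\eta^+_{k-1}$, and the inductive invariant gives membership directly.
\end{itemize}

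With $\eta_k$ defined, I take $\set{X}_{f,k}=\set{T}^{M_k}_{\eta^+_k}$. Since $x_{\tau_k}\in\set{T}^{M_k}_{\eta_k}$, two sub-cases arise.
\begin{itemize}
\item If $\eta_k\ge M_k$, Def.~\ref{def:msh_trans_set} gives $x_{\tau_k}\in\mathrm{Pre}^{M_k}(\set{T}^{M_k}_{\eta_k-M_k})$.
\item If $\eta_k=0$, then $x_{\tau_k}\in\set{C}^{M_k}_\infty\cap\set{C}^{M^{(1)}}_\infty$, so $x_{\tau_k}\in\mathrm{Pre}^{M_k}(\set{C}^{M_k}_\infty)$ by invariance. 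Here the target $\set{T}^{M_k}_0(\set{C}^{M^{(1)}}_\infty)$ must still be reachable. This is the subtle point: the $L=0$ tail handles the steady state. Otherwise, $\eta_k=0$ means $x_{\tau_k}\in\set{C}^{M^{(1)}}_\infty$, which would have made $i^\star\ge1$, so this case only occurs with $M^{(1)}=M_k$.
\end{itemize}

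Membership in $\mathrm{Pre}^{M_k}(\set{X}_{f,k})$ yields a held input $u\in\set{U}$ that robustly keeps the states in $\set{X}$ and lands in $\set{X}_{f,k}$. By the equivalence in \cite[Thm.~1]{Schutz_MPC}, the nominal trajectory under $u$ satisfies the tightened constraints~\eqref{eqn:ocp_tight_X}-\eqref{eqn:ocp_term_K}. For the tail~\eqref{eqn:ocp_nom_X}, I extend by repeatedly applying the hold that keeps the nominal state in $\set{C}^{M_k}_\infty$. This is possible since $\set{X}_{f,k}\subseteq\set{C}^{M_k}_\infty$ and $0\in\set{W}_i$ (Asm.~\ref{asm:w_nest}), so a nominal trajectory is one admissible disturbance realization and stays in $\set{X}$.

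\textbf{Constraint satisfaction (i) and preservation of the invariant.} The same robustness result shows that the true closed-loop states satisfy $x_t\in\set{X}$ for $\tau_k<t\le\tau_{k+1}$ and $x_{\tau_{k+1}}\in\set{X}_{f,k}=\set{T}^{M_k}_{\eta^+_k}$. This re-establishes the invariant at $\tau_{k+1}$, with the stored $N^{(1)}=\eta^+_k$. Input satisfaction $u_t\in\set{U}$ follows from~\eqref{eqn:ocp_hold}, and $x_{\tau_0}\in\set{X}$ holds since $\set{C}^{M_{-1}}_\infty\subseteq\set{X}$.

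\textbf{Finite-time transition (iii).} Under a constant reference after $k$, the plan $\bar{\mathbb{P}}(M_{k-1},M^\mathrm{ref}_k)$ is retrieved once and never overwritten. Strong connectivity (Rmk.~\ref{rmk:strong_connect}) guarantees it exists with finite $L$ and finite schedule entries. I define the lexicographic potential (remaining plan length, $N^{(1)}$). At each update, either $i^\star\ge1$ and the plan length drops, or $N^{(1)}$ is replaced by $\eta^+_k\le N^{(1)}-M_k$ whenever $N^{(1)}>0$. If $N^{(1)}=0$ at update $k'$, the invariant gives $x_{\tau_{k'}}\in\set{C}^{M^{(1)}}_\infty$, forcing $i^\star\ge1$. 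So the potential decreases strictly and is bounded below, and in finitely many updates $L=0$, giving $M_{k+j}=M^\mathrm{ref}_k$.

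\textbf{Main obstacle.} The main difficulty will be the bookkeeping in the feasibility step. I must check that the persisted $N^{(1)}$ after~\eqref{eqn:ff_n} still certifies membership of $x_{\tau_k}$ in the corresponding transition set when $i^\star>0$. There the new stage's schedule entry is the offline $\bar N$, which is covered by~\eqref{eqn:N_bar} since $x_{\tau_k}\in\set{C}^{M_k}_\infty$.
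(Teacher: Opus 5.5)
Your proposal is correct and follows essentially the same route as the paper's proof. Both arguments use induction on $\tau_k$ via $x_{\tau_{k+1}}\in\set{X}_{f,k}\subseteq\set{C}^{M_k}_\infty$, the same case split on whether the first schedule entry is the offline bound $\bar{N}$ or the carried-over $\eta^+_k$, maximality of $i^\star$ to confine $\eta_k=0$ to the steady-state tail, the zero-disturbance extension for the nominal tail, and a strictly decreasing counter for (iii); your explicit landing invariant and lexicographic potential on (remaining plan length, $N^{(1)}$) are just a tidier packaging of that same argument.
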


\begin{proof}
    Assume the OCP~\eqref{eqn:ocp}-\eqref{eqn:Ek} is feasible at $\tau_k$, where Alg.~\ref{alg:online} selects $M_k=M^{(0)}$ and $\set{X}_{f,k}=\set{T}^{M_k}_{\eta^+_k}(\set{C}^{M^{(1)}}_\infty)$ by~\eqref{eqn:m_k}-\eqref{eqn:xf_k}. Constraint~\eqref{eqn:ocp_hold} gives $\bar{u}^\star_{\tau_k\mid\tau_k}\in\set{U}$, and by~\eqref{eqn:ocp_tight_X}-\eqref{eqn:ocp_term_K} and~\cite[Thm.~1]{Schutz_MPC}, holding $\bar{u}^\star_{\tau_k\mid\tau_k}$ for $M_k$ steps from $x_{\tau_k}$ robustly yields $x_t\in\set{X}$ for all $t\in\mathbb{Z}_{\tau_k:\tau_{k+1}-1}$ and $x_{\tau_{k+1}}\in\set{X}_{f,k}\subseteq\set{C}^{M_k}_\infty$, as every set in Def.~\ref{def:msh_trans_set} is intersected with $\set{C}^M_\infty$.

    We show~\eqref{eqn:ocp}-\eqref{eqn:Ek} is then feasible at $\tau_{k+1}$. The 
    plan $\mathbb{P}_{k+1}$ begins with $M^{(0)}=M_k$, whether overwritten 
    by~\eqref{eqn:overwrite} or carried over, so $i=0$ is admissible 
    in~\eqref{eqn:ff_i}, and~\eqref{eqn:ff}-\eqref{eqn:m_k} return an $M_{k+1}$ 
    with $x_{\tau_{k+1}}\in\set{C}^{M_{k+1}}_\infty$. The first entry of $\mathbb{N}_{k+1}$ is either overwritten with the finite offline bound $N^{(1)}=\bar{N}(M_{k+1},M^{(1)})$ or carried over as $N^{(1)}=\eta^+_k$, in which case $i^\star=0$ and $M_{k+1}=M_k$. Either way 
    $x_{\tau_{k+1}}\in\set{T}^{M_{k+1}}_{N^{(1)}}(\set{C}^{M^{(1)}}_\infty)$, in the former case by Rmk.~\ref{rmk:strong_connect} and~\eqref{eqn:N_bar}, and in the latter by $\set{X}_{f,k}$ as chosen by~\eqref{eqn:xf_k}. Thus, $\eta=N^{(1)}$ is admissible in~\eqref{eqn:eta_k}, $\eta_{k+1}$ is well 
    defined, and~\eqref{eqn:xf_k} returns a stored, pre-tightened 
    $\set{X}_{f,k+1}\subseteq\set{C}^{M_{k+1}}_\infty$ with 
    $x_{\tau_{k+1}}\in\mathrm{Pre}^{M_{k+1}}(\set{X}_{f,k+1})$: by 
    Def.~\ref{def:msh_trans_set} if $\eta_{k+1}\geq M_{k+1}$, and by 
    Def.~\ref{def:msh_ctrl_inv} if $\eta_{k+1}=0$, which by maximality of $i^\star$ occurs only in the steady-state tail, where $M^{(1)}=M_{k+1}$ and $\set{X}_{f,k+1}=\set{C}^{M_{k+1}}_\infty$. An $M_{k+1}$-step held input 
    therefore satisfies~\eqref{eqn:ocp_tight_X}-\eqref{eqn:ocp_term_K} 
    and~\eqref{eqn:ocp_hold}, and the nominal tail~\eqref{eqn:ocp_nom_X} follows by 
    continuing within $\set{C}^{M_{k+1}}_\infty$ with $w_t=0$. 

    The same argument holds at the first update $\tau_0$, where 
    $\mathbb{P}_{-1}=(M_{-1},M_{-1})$, $\mathbb{N}_{-1}=(0)$, and 
    $x_0\in\set{C}^{M_{-1}}_\infty$ by assumption. By induction the OCP is 
    feasible at every $\tau_k$ under any reference sequence, proving \textit{(i)} and~\textit{(ii)}.
    
    % For \textit{(iii)}, the plan retrieved at $\tau_k$ is never overwritten under 
    % a step reference, as $M^\mathrm{ref}_{k+i}=M^{(L)}$ for all 
    % $i\in\mathbb{Z}_{>0}$, and so the plan only advances. The $i$-th subset prescribes 
    % $M^{(i-1)}$ to steer to $\set{C}^{M^{(i)}}_\infty$ in at most $\bar{N}^{(i)}$ steps: each update $\tau_{k+1}$ ends in $\set{X}_{f,k}$, 
    % so~\eqref{eqn:eta_k} returns $\eta_{k+1}\leq\eta^+_k$ and the counter falls by 
    % at least $M^{(i-1)}$ per update, until $\eta^+_k=0$ places the state in 
    % $\set{C}^{M^{(i)}}_\infty$ and~\eqref{eqn:ff_i} advances the plan; if $\bar{N}^{(i)}=0$ the plan advances immediately by Thm.~\ref{thm:nest}. Thus, the applied hold reaches $M^\mathrm{ref}_k$ in at most $\sum_{i=1}^L\bar{N}^{(i)}$ steps, and advancements~\eqref{eqn:ff},~\eqref{eqn:eta_k} only accelerate this.

    For \textit{(iii)}, the plan retrieved at $\tau_k$ is never overwritten under 
    a step reference, and so it only advances. The $i$-th stage prescribes 
    $M^{(i-1)}$ to steer to $\set{C}^{M^{(i)}}_\infty$ in at most 
    $\bar{N}^{(i)}$ steps: each update $\tau_{k+j}$ of the stage ends in 
    $\set{X}_{f,k+j}$, so~\eqref{eqn:eta_k} returns 
    $\eta_{k+j+1}\leq\eta^+_{k+j}$ and the counter falls by at least 
    $M^{(i-1)}$ per update, until $\eta^+_{k+j}=0$ places the state in 
    $\set{C}^{M^{(i)}}_\infty$ and~\eqref{eqn:ff_i} advances the plan; if 
    $\bar{N}^{(i)}=0$ the plan advances immediately by Thm.~\ref{thm:nest}. 
    Thus, the applied hold reaches $M^\mathrm{ref}_k$ in at most 
    $\sum_{i=1}^L\bar{N}^{(i)}$ steps, and 
    advancements~\eqref{eqn:ff},~\eqref{eqn:eta_k} only accelerate this.
\end{proof}

\begin{remark}
    The guarantees above rely solely on the OCP constraints. Any objective~\eqref{eqn:ocp_cost} may be used provided~\eqref{eqn:ocp}-\eqref{eqn:Ek} is a convex QP.
\end{remark}

\begin{remark}
    At $\tau_k$, the MPC may use any $N_k$ that is a multiple of $M_k$ because recursive feasibility only relies on the first $M_k$ steps by design of~\eqref{eqn:ocp}-\eqref{eqn:Ek}.
\end{remark}

\section{Example: Cruise Control}\label{sec:cc}

We now present an adaptive-sampling cruise control to demonstrate the method's intuitive reference hold tracking behavior, and applicability to non-standard LTI systems, specifically those with state-dependent uncertainty. 

Consider the ``lead" ($0$) and ``ego" ($1$) vehicles in Fig.~\ref{fig:platoon}, with positions $\{s_{0,t},s_{1,t}\}$ and velocities $\{v_{0,t},v_{1,t}\}$. Each is modeled as a double integrator with sampling time $T_s$, and the input $u_t$ and uncertainty $w_t$ are the ego and lead accelerations, respectively. For state $x_t=[d_t,v_{1,t},v_{0,t}]^\top$ with following distance $d_t=s_{0,t}-s_{1,t}$,
\begin{align}\label{eqn:cc_dyn}
    x_{t+1}
    &= \begin{bmatrix} 1 & -T_s & T_s \\ 0 & 1 & 0 \\ 0 & 0 & 1 \end{bmatrix} x_t+\begin{bmatrix} -\frac{1}{2}T_s^2 \\ T_s \\ 0 \end{bmatrix} u_t + \begin{bmatrix} \frac{1}{2}T_s^2 \\ 0 \\ T_s\end{bmatrix} w_t. %\nonumber \\ &=Ax_t+Bu_t+Ew_t.
\end{align} 
State constraints bound the following distance and ego velocity, and input constraints bound the ego acceleration,
\begin{subequations}\label{eqn:cc_XU}
\begin{align}
    \set{X}&=\{x\mid d_{\mathrm{min}}\leq d\leq d_{\mathrm{max}},~v_{\mathrm{min}}\leq v_1\leq v_{\mathrm{max}}\},  \label{eqn:cc_X}\\
    \set{U}&=\{u\mid u_{\mathrm{min}}\leq u\leq u_{\mathrm{max}}\}. \label{eqn:cc_U}
\end{align}
\end{subequations}

Robust control invariant sets cannot be computed directly for~\eqref{eqn:cc_dyn}-\eqref{eqn:cc_XU}, as the uncontrolled state $v_{0,t}$ cannot be bounded by $u_t$, rendering the precursor iterations empty~\cite{Lefevre_ACC_2016,Kim_ACC_2019}. We overcome this with a state-dependent uncertainty model that imposes the ego's velocity limits on the lead,
\begin{subequations} \label{eqn:cc_W}
\begin{align}
    \set{W}_t(v_{0,t})&=\{w_t \mid \ubar{w}_t\leq w_t \leq \bar{w}_t\},\\
    \ubar{w}_t&= \max \left( \frac{v_{\mathrm{min}} - v_{0,t}}{T_s}, w_{\mathrm{min}} \right),\\
    \bar{w}_t&= \min \left( \frac{v_{\mathrm{max}} - v_{0,t}}{T_s}, w_{\mathrm{max}} \right).
\end{align}
\end{subequations}

For~\eqref{eqn:cc_dyn}-\eqref{eqn:cc_W}, the robust control invariant set is computed offline as a set of slices at discrete values of $v_0\in[v_{\mathrm{min}},v_{\mathrm{max}}]$~\cite{Lefevre_ACC_2016,Kim_ACC_2019}. This technique was extended to robust $M$-step hold control invariant sets in~\cite{Schutz_MPC}, which designed a fixed-$M$-step hold MPC for~\eqref{eqn:cc_dyn}-\eqref{eqn:cc_W} that guaranteed recursive feasibility with a state-dependent target set $\set{C}^M(v_{0,\tau_k})$, obtained at each input update by intersecting the precomputed slices corresponding to $M$-step roll-outs of the worst-case lead trajectories from $v_{0,\tau_k}$~\cite[Alg. 2, Fig. 3]{Schutz_MPC}. 

Here, we extend~\cite{Schutz_MPC} by incorporating robust $M$-step hold transition sets to track a reference input hold. State-dependent invariant sets imply state-dependent transition sets, so Alg.~\ref{alg:offline} cannot be used to precompute the optimal plans and worst-case schedules. Instead, we select $\{M_k,N_k,\set{X}_{f,k}\}$ using $x_{\tau_k}$ and online set computation. We choose a divisibility chain for $\mathbb{M}$, as in Rmk.~\ref{rmk:factors}, to reduce computation because Thm.~\ref{thm:nest} guarantees all hold reductions. 
%We reduce the computational demand by selecting $\mathbb{M}=\{1,2,4,8,16\}$, which automatically guarantees all hold reductions by Thm.~\ref{thm:nest}. 

When increasing the hold, our online search aims to minimize deviation from the reference hold $|M_k-M_k^\mathrm{ref}|$. If $M^\mathrm{ref}_k$ can neither be applied immediately nor transitioned to under $M_{k-1}$, we recursively test the immediate or finite-time transition to the next-smallest hold in $\mathbb{M}$, reverting to $M_{k-1}$ if all fail. Other strategies may prioritize different metrics. For simplicity, we fix $N_k=\max(\mathbb{M})$ for all $k\in\mathbb{Z}_{\geq 0}$. 

We simulate with $T_s=0.1$, $\mathbb{M}=\{1,2,4,8,16\}$, $\{d_\mathrm{min},d_{\mathrm{max}}\}=\{5,100\}$, $\{v_\mathrm{min},v_{\mathrm{max}}\}=\{0,40\}$, $\{u_\mathrm{min},u_{\mathrm{max}}\}=\{w_\mathrm{min},w_{\mathrm{max}}\}=\{-4,4\}$, $Q=P=\mathrm{diag}(10,0,0)$, $R=1$, and $x_0=[70,30,25]^\top$. 
% To emphasize our method's indifference to the cost function, we supplement~\eqref{eqn:ocp_cost} with a jerk term 
% \begin{equation}
%     \sum_{i=0}^{N_k-1}\|\bar{u}_{\tau_k+i\mid\tau_k}-\bar{u}_{\tau_k+i-1\mid\tau_k}\|^2_{R_\Delta}
% \end{equation}
% where $\bar{u}_{\tau_k-1\mid\tau_k}=u_{\tau_k-1}$ is the previous input and $R_\Delta=10$. 
We choose $d^\mathrm{ref}_t=0$ to encourage the ego to minimize the distance.

The simulation result is shown in Fig.~\ref{fig:sim}. The MPC satisfies constraints at all steps, even when the lead performs a full brake ($18$-$24$s). A requested increase in hold is sometimes immediately safe ($14$s), but elsewhere the controller must apply the largest safe hold ($2$-$2.2$, $18$-$20$, $26$-$27.2$s), or revert to the previous one ($22$-$24$s). The full-brake maneuver highlights the intuitive behavior of robust adaptive-sampling MPC: near the invariant set boundaries it may be unsafe to slow the input update rate, and no transition to a safer state may exist when the input and uncertainty have comparable authority (as here, where they are equal). 

\begin{figure}
    \centering
    \includegraphics[width=1\linewidth]{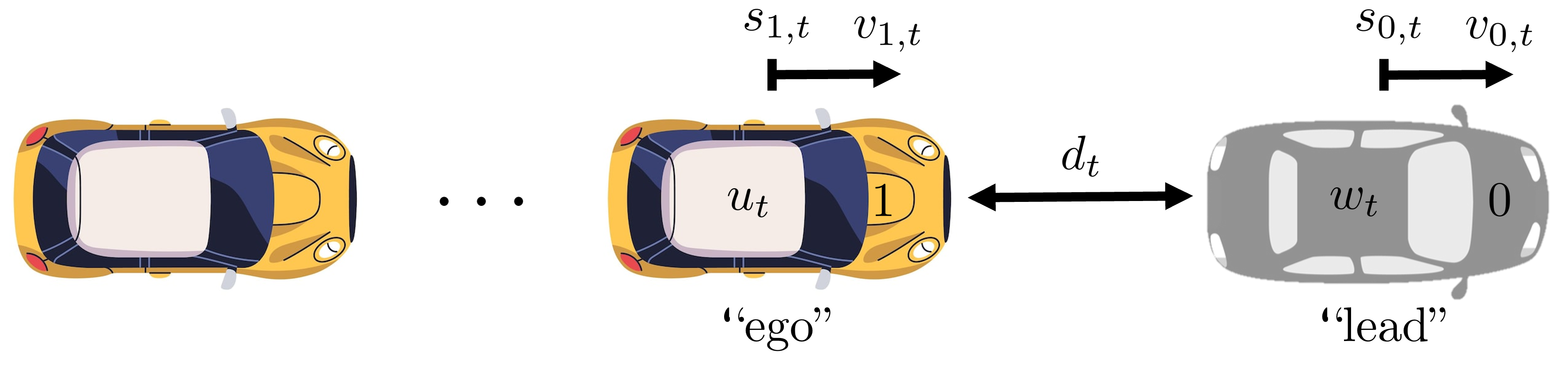}
    \caption{Vehicle platoon: controlled ``ego" ($1$) trails uncontrolled ``lead" ($0$).}
    \label{fig:platoon}
\end{figure}

\begin{figure}
    \centering
    \includegraphics[width=1\linewidth]{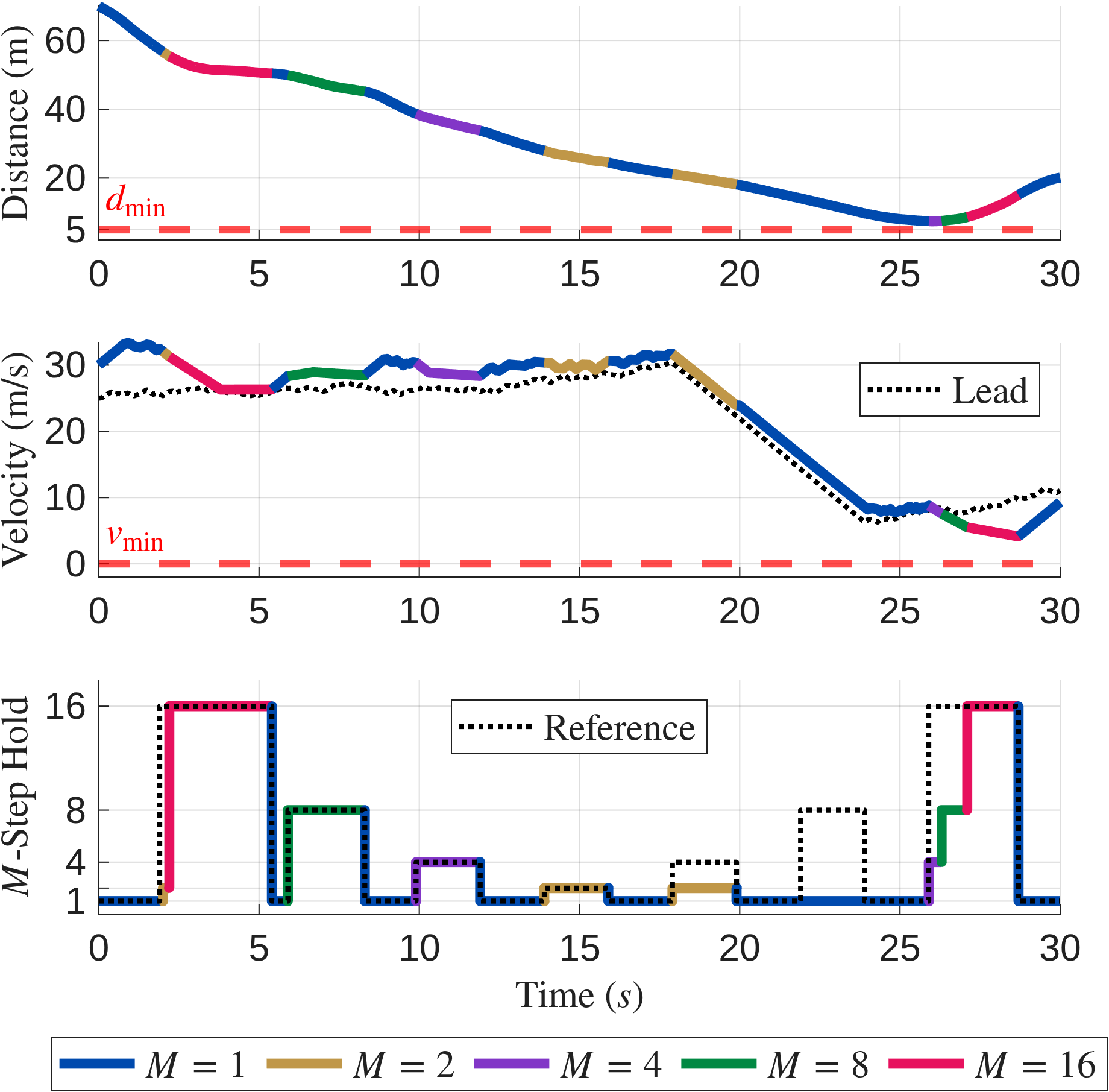}
    \caption{Reference input hold tracking with robust safety guarantees in a cruise control. Online hold selection minimizes deviation from the reference.}
    \label{fig:sim}
\end{figure}

\section{Example II: Lightly Damped Oscillator}\label{sec:osc}

The cruise control of Sec.~\ref{sec:cc} has state-dependent uncertainty bounds, requiring online set computation. We now present a second example with state-independent bounds, for which Algs.~\ref{alg:offline}-\ref{alg:online} apply in full: all sets, transition bounds, plans and schedules are computed offline, and each input update reduces to inequality tests and a single QP.

\subsection{System}

Consider a lightly damped oscillator with damping ratio $\zeta$, natural frequency $\omega_n$, and damped frequency $\omega_d=\omega_n\sqrt{1-\zeta^2}$, sampled with a zero-order hold at $T_s$. The state $x_t=[q_t,\,v_t]^\top$ is position and velocity, and the input $u_t$ and uncertainty $w_t$ are both accelerations. Writing $\rho=e^{-\zeta\omega_nT_s}$ for the contraction and $\theta=\omega_dT_s$ for the rotation over one time step,
\begin{subequations}\label{eqn:osc_dyn}
\begin{gather}
    x_{t+1}=Ax_t+Bu_t+Ew_t,\label{eqn:osc_dyn_a}\\
    A=\rho\begin{bmatrix}
        \cos\theta+\frac{\zeta\omega_n}{\omega_d}\sin\theta & \frac{1}{\omega_d}\sin\theta\\[4pt]
        -\frac{\omega_n^2}{\omega_d}\sin\theta & \cos\theta-\frac{\zeta\omega_n}{\omega_d}\sin\theta
    \end{bmatrix},\label{eqn:osc_A}\\[2pt]
    B=E=\begin{bmatrix}(1-A_{11})/\omega_n^2\\[2pt] A_{12}\end{bmatrix}.\label{eqn:osc_B}
\end{gather}
\end{subequations}
% \begin{subequations}\label{eqn:osc_dyn}
% \begin{gather}
%     x_{t+1}=Ax_t+Bu_t+Ew_t,\label{eqn:osc_dyn_a}\\
%     A=\rho\begin{bmatrix}
%         \cos\theta+\frac{\zeta\omega_n}{\omega_d}\sin\theta & \frac{1}{\omega_d}\sin\theta\\[4pt]
%         -\frac{\omega_n^2}{\omega_d}\sin\theta & \cos\theta-\frac{\zeta\omega_n}{\omega_d}\sin\theta
%     \end{bmatrix},\label{eqn:osc_A}\\[2pt]
%     B=\begin{bmatrix}\big(1-\rho\cos\theta-\rho\frac{\zeta\omega_n}{\omega_d}\sin\theta\big)\big/\omega_n^2\\[4pt]
%         \rho\sin\theta/\omega_d\end{bmatrix},\\
%     E=B.\label{eqn:osc_B}
% \end{gather}
% \end{subequations}
The damped frequency is chosen so that $\theta=\pi/6$, i.e. the oscillator completes one period every $2\pi/\theta=12$ time steps. 
% With the values of Table~\ref{tab:osc_params} this gives
% \begin{equation}\label{eqn:osc_num}
%     A=\begin{bmatrix} 0.8665 & 0.0950\\ -2.6046 & 0.8565\end{bmatrix},\quad
%     B=E=\begin{bmatrix}0.0049\\ 0.0950\end{bmatrix}.
% \end{equation} 
The state and input constraints are
\begin{subequations}\label{eqn:osc_XU}
\begin{align}
    \set{X}&=\{x\mid |q|\leq q_{\mathrm{lim}},~|v|\leq v_{\mathrm{lim}}\},\\
    \set{U}&=\{u\mid |u|\leq u_{\mathrm{lim}}\},
\end{align}
\end{subequations}
and the uncertainty bounds grow as a power of the steps elapsed since the last input update and state measurement,
\begin{equation}\label{eqn:osc_W}
    \set{W}_{\delta}=\{w\mid |w|\leq \bar{w}_0(1+\delta)^{p}\},\quad \delta\in\mathbb{Z}_{\geq0},
\end{equation}
satisfying Asm.~\ref{asm:w_nest}. Parameter values are given in Table~\ref{tab:osc_params}.

\begin{table}[t]
\centering
\caption{Parameters for the lightly damped oscillator.}
\label{tab:osc_params}
\small
\begin{tabular}{@{}c l c @{\hspace{1.2em}} c l c@{}}
\toprule
Sym. & Description & Value & Sym. & Description & Value\\
\midrule
$T_s$    & sampling time      & $0.1$   & $v_{\mathrm{lim}}$ & vel. limit             & $2$\\
$\zeta$   & damping ratio      & $0.01$  & $u_{\mathrm{lim}}$        & acc. limit            & $1.3$\\
$\theta$ & rotation/step  & $\pi/6$ & $\bar{w}_0$       & base uncert. & $0.03$\\
$q_{\mathrm{lim}}$ & pos. limit    & $0.45$  & $p$               & uncert. exp.       & $1.4$\\
\midrule
\multicolumn{6}{@{}l@{}}{$\mathbb{M}=\{1,2,3,4,6,8,10,12\}$, candidate holds}\\
\bottomrule
\end{tabular}
\end{table}

\subsection{Offline: Sets and Transition Graph}

Table~\ref{tab:osc_vol} reports $\mathrm{vol}(\set{C}^M_\infty)$ for each $M\in\mathbb{M}$. Feasibility is \textit{not} monotone in the hold length: $\set{C}^6_\infty$ and $\set{C}^{12}_\infty$ are empty while $\set{C}^8_\infty$ and $\set{C}^{10}_\infty$ are not, so it is safe to hold the input for ten time steps but not for six. For system~\eqref{eqn:osc_dyn}, a held input can only shift the center of the oscillation, not its amplitude, so its effect over a hold depends on where that hold ends in the $12$-step cycle rather than on its length. Alg.~\ref{alg:offline}, Line~\ref{alg:offline_M} prunes $\mathbb{M}$ to $\{1,2,3,4,8,10\}$; these $\set{C}^M_\infty$ are shown in Fig.~\ref{fig:osc_sets}.

\begin{table}[t]
\centering
\caption{Volume of $\set{C}^M_\infty$ for each available hold, with $\mathrm{vol}(\set{X})=3.60$.}
\label{tab:osc_vol}
\begin{tabular}{r rrrrrrrr}
\toprule
$M$            & $1$ & $2$ & $3$ & $4$ & $6$ & $8$ & $10$ & $12$ \\
\midrule
$\mathrm{vol}$ & $2.98$ & $2.94$ & $2.91$ & $2.86$ & --- & $2.78$ & $2.49$ & --- \\
\bottomrule
\end{tabular}
\end{table}

\begin{figure}[t]
    \centering
    \includegraphics[width=1\linewidth]{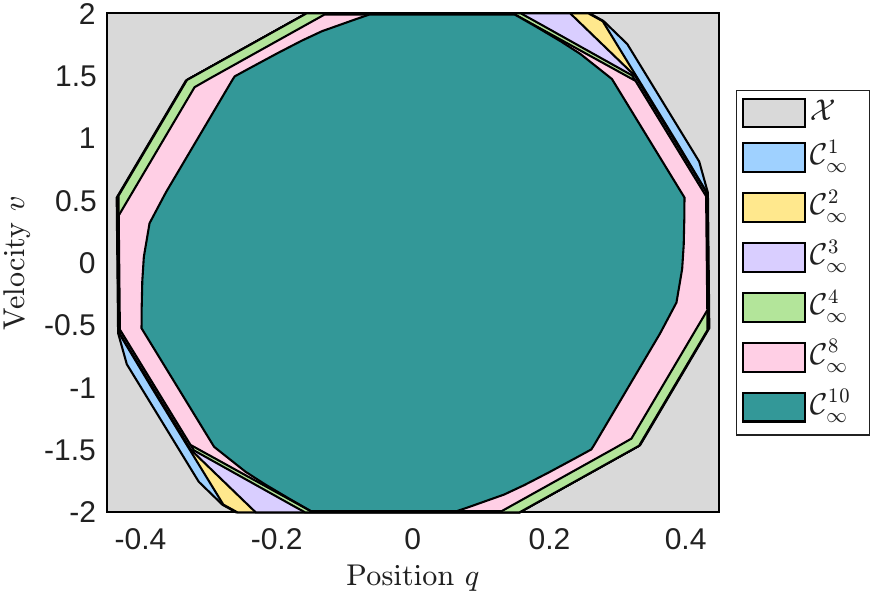}
    \caption{Maximal robust $M$-step hold control invariant sets for system~\eqref{eqn:osc_dyn}.}
    \label{fig:osc_sets}
\end{figure}

Table~\ref{tab:osc_N} lists the transition bounds $\bar{N}(M,M^\prime)$. Every reduction in hold length is immediate, either by Thm.~\ref{thm:nest} or direct verification, so steering is only required to \textit{increase} the hold. The graph $G(\mathbb{M},\mathbb{E})$ is strongly connected, since $\gcd(\mathbb{M})=1\in\mathbb{M}$ and $\bar{N}(1,M^\prime)<\infty$ for all $M^\prime\in\mathbb{M}$.

\begin{table}[t]
\centering
\caption{Transition bounds $\bar{N}(M,M^\prime)$, from $M$ (rows) to $M^\prime$ (columns).}
\label{tab:osc_N}
\begin{tabular}{r rrrrrr}
\toprule
    & $1$ & $2$ & $3$ & $4$ & $8$ & $10$ \\
\midrule
$1$ & $0$ & $4$ & $4$ & $4$ & $6$ & $6$ \\
$2$ & $0$ & $0$ & $4$ & $4$ & $6$ & $6$ \\
$3$ & $0$ & $0$ & $0$ & $3$ & $6$ & $6$ \\
$4$ & $0$ & $0$ & $0$ & $0$ & $8$ & $8$ \\
$8$ & $0$ & $0$ & $0$ & $0$ & $0$ & $16$ \\
$10$ & $0$ & $0$ & $0$ & $0$ & $0$ & $0$ \\
\bottomrule
\end{tabular}
\end{table}

We compare two different transition costs,
\begin{subequations}\label{eqn:osc_costs}
\begin{align}
    c_{\mathrm{steps}}(M,M^\prime)&=\bar{N}(M,M^\prime)+\epsilon\\
    c_{\mathrm{updates}}(M,M^\prime)&=\bar{N}(M,M^\prime)/M+\epsilon,
\end{align}
\end{subequations}
penalizing respectively the time steps and the input updates spent in transition, with a small fixed $\epsilon\ll1$ added to each edge to break ties in favor of fewer hold changes. The two yield different optimal plans (Table~\ref{tab:osc_plans}). Under $c_{\mathrm{steps}}$, transitions into a long hold are routed through $M=1$: reducing the hold is free by Thm.~\ref{thm:nest}, and the finest hold reaches $\set{C}^{10}_\infty$ in $6$ steps rather than the $16$ steps required directly from $M=8$. Under $c_{\mathrm{updates}}$ the direct edge wins instead, since those $16$ steps span only two input updates. Lengthening the hold quickly therefore requires temporarily shortening it, an option the transition graph exposes automatically.

\begin{table}[t]
\centering
\caption{Optimal plans for two transitions under costs~\eqref{eqn:osc_costs}.}
\label{tab:osc_plans}
\small
\begin{tabular}{@{}c ccc ccc@{}}
\toprule
& \multicolumn{3}{c}{$c_{\mathrm{steps}}$} & \multicolumn{3}{c}{$c_{\mathrm{updates}}$}\\
\cmidrule(lr){2-4}\cmidrule(lr){5-7}
$(M,M^\prime)$ & $\bar{\mathbb{P}}$ & stp. & upd. & $\bar{\mathbb{P}}$ & stp. & upd.\\
\midrule
$(4,10)$  & $(4,1,10)$ & $6$ & $6$ & $(4,10)$  & $8$  & $2$\\
$(8,10)$  & $(8,1,10)$ & $6$ & $6$ & $(8,10)$  & $16$ & $2$\\
\bottomrule
\end{tabular}
\end{table}

\subsection{Online: Reference Hold Tracking}

We simulate the closed loop under $c_{\mathrm{steps}}$, with the objective~\eqref{eqn:ocp_cost} tracking a sustained oscillation of~\eqref{eqn:osc_dyn} whose amplitude exceeds the velocity bound of $\set{X}$ ($\max(|v^\mathrm{ref}|) = 2.3$). Such a reference is admissible because the guarantees of Thm.~\ref{thm:mpc_guarantees} rest on the OCP constraints alone; the controller tracks it only as closely as the active $\set{C}^{M_k}_\infty$ permits. The disturbance is realized at the vertex of $\set{W}_{\delta(t)}$ driving the state furthest from the invariant set of the next planned hold.

The result is shown in Fig.~\ref{fig:osc_sim}. Constraints are satisfied at \textit{all} time steps, not only at input updates, using $19$ input updates over $80$ time steps. Applied holds are never interrupted, and the advancements~\eqref{eqn:ff} and~\eqref{eqn:eta_k} shorten transitions relative to their worst-case schedules: the transition from $M=4$ to $M=10$ completes in one time step, against the six steps scheduled for the retrieved plan $\bar{\mathbb{P}}(4,10)=(4,1,10)$. The brief reduction to $M=1$ near $t=2$\,s is that indirect plan being executed: the hold is momentarily shortened in order to lengthen it sooner.

\begin{figure}[t]
    \centering
    \includegraphics[width=1\linewidth]{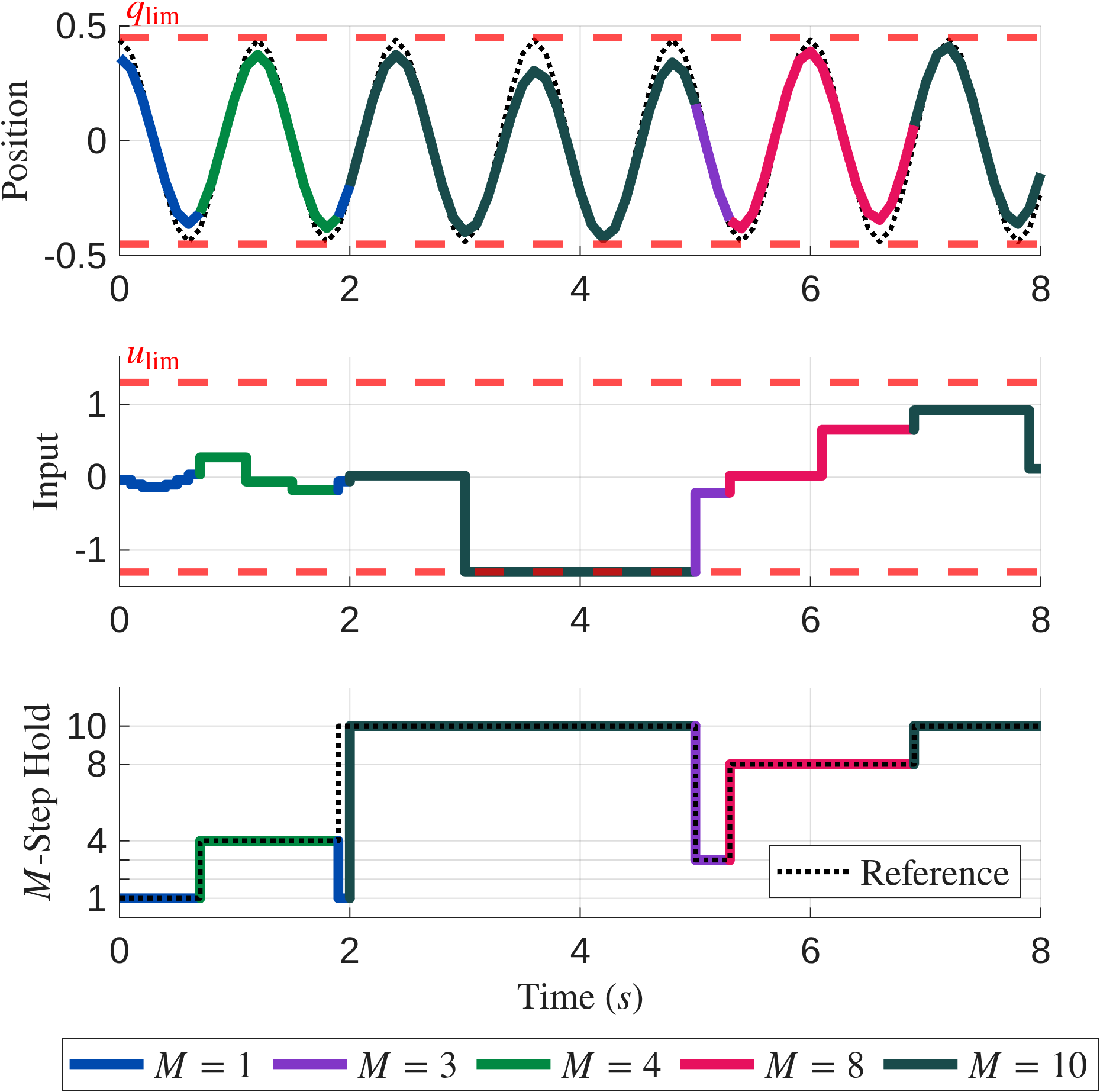}
    \caption{Closed loop for~\eqref{eqn:osc_dyn}-\eqref{eqn:osc_W} tracking a reference hold $M^\mathrm{ref}_k$.}
    \label{fig:osc_sim}
\end{figure}

\section{Conclusion}\label{sec:conc}

This paper presented a framework for adaptive-sampling control of linear systems. We guaranteed robust constraint satisfaction during changes in the input update rate by integrating robust $M$-step hold invariance with a novel transition planning architecture. Offline set computations ensure the online MPC must only solve a single convex quadratic program per update, avoiding a mixed-integer program. We proved robust constraint satisfaction, recursive feasibility, and finite-time transition to the reference hold. A cruise control case study demonstrated the framework's utility.

\bibliographystyle{ieeetr}
\bibliography{root}

@book{Borrelli_MPC_book,
    author = {Borrelli, Francesco and Bemporad, Alberto and Morari, Manfred},
    title = {Predictive Control for Linear and Hybrid Systems},
    publisher = {Cambridge University Press},
    year = {2017}
}

@article{Kolmanovsky_sets,
author = {Kolmanovsky, Ilya and Gilbert, Elmer G.},
title = {Theory and computation of disturbance invariant sets for discrete-time linear systems},
journal = {Mathematical Problems in Engineering},
volume = {4},
number = {4},
pages = {934097},
year = {1998}
}

@ARTICLE{Schutz_MSH,
  author={Schutz, Spencer and Vallon, Charlott and Recht, Benjamin and Borrelli, Francesco},
  journal={IEEE Control Systems Letters}, 
  title={On Sampling Time and Invariance}, 
  year={2025},
  volume={9},
  number={},
  pages={919-924},
}

@ARTICLE{Schutz_MPC,
  title={Safe Adaptive-Sampling Control via Robust M-Step Hold Model Predictive Control}, 
  author={Spencer Schutz and Charlott Vallon and Francesco Borrelli},
  year={2026},
  note={arXiv preprint arXiv:2603.27374.}
}

@article{Xue_Adaptive_Sampling_MPC,
  title={Active collision avoidance system design based on model predictive control with varying sampling time},
  author={Xue, Wanying and Zheng, Ling},
  journal={Automotive innovation},
  volume={3},
  number={1},
  pages={62--72},
  year={2020},
  publisher={Springer}
}

@article{Gomozov_Adaptive_Sampling_MPC,
  title={Adaptive energy management system based on a real-time model predictive control with nonuniform sampling time for multiple energy storage electric vehicle},
  author={Gomozov, Oleg and Trov{\~a}o, Jo{\~a}o Pedro F and Kestelyn, Xavier and Dubois, Maxime R},
  journal={IEEE Transactions on Vehicular Technology},
  volume={66},
  number={7},
  pages={5520--5530},
  year={2016},
  publisher={IEEE}
}

@Article{Choi_Adaptive_Sampling_MPC,
AUTHOR = {Choi, Yoonsuk and Lee, Wonwoo and Kim, Jeesu and Yoo, Jinwoo},
TITLE = {A Variable-Sampling Time Model Predictive Control Algorithm for Improving Path-Tracking Performance of a Vehicle},
JOURNAL = {Sensors},
VOLUME = {21},
YEAR = {2021},
NUMBER = {20},
ARTICLE-NUMBER = {6845},
URL = {https://www.mdpi.com/1424-8220/21/20/6845},
PubMedID = {34696056},
ISSN = {1424-8220},
DOI = {10.3390/s21206845}
}

@article{Dorf_Adaptive_Sampling_PID,
  title={Adaptive sampling frequency for sampled-data control systems},
  author={Dorf, RC and Farren, M and Phillips, C},
  journal={IRE Transactions on Automatic Control},
  volume={7},
  number={1},
  pages={38--47},
  year={1962},
  publisher={IEEE}
}

@inproceedings{Henriksson_Adaptive_Sampling_LQR,
  title={Optimal on-line sampling period assignment for real-time control tasks based on plant state information},
  author={Henriksson, Dan and Cervin, Anton},
  booktitle={Proceedings of the 44th IEEE CDC},
  pages={4469--4474},
  year={2005},
  organization={IEEE}
}

@INPROCEEDINGS{Behrunani_Multi_Horizon_MPC,
  author={Behrunani, Varsha N. and Heer, Philipp and Smith, Roy S. and Lygeros, John},
  booktitle={2024 IEEE 63rd Conference on Decision and Control (CDC)}, 
  title={Recursive feasibility guarantees in multi-horizon MPC}, 
  year={2024},
  volume={},
  number={},
  pages={297-302},
  doi={10.1109/CDC56724.2024.10886145}}

@inproceedings{Shen_k-recurrence,
  author={Shen, Yue and Bichuch, Maxim and Mallada, Enrique},
  booktitle={2022 IEEE 61st Conference on Decision and Control (CDC)}, 
  title={Model-free Learning of Regions of Attraction via Recurrent Sets}, 
  year={2022},
  volume={},
  number={},
  pages={4714-4719},
}

@article{Olaru_p-invariance,
  author={Olaru, Sorin and Soyer, Martin and Zhao, Zhixin and Dórea, Carlos E. T. and Kofman, Ernesto and Girard, Antoine},
  journal={IEEE Transactions on Automatic Control}, 
  title={From Relaxed Constraint Satisfaction to p-Invariance of Sets}, 
  year={2024},
  volume={69},
  number={10},
  pages={7036-7042},
}

@article{Cagienard_Move-block_MPC,
title = {Move blocking strategies in receding horizon control},
journal = {Journal of Process Control},
volume = {17},
number = {6},
pages = {563-570},
year = {2007},
issn = {0959-1524},
author = {R. Cagienard and P. Grieder and E.C. Kerrigan and M. Morari}}

@INPROCEEDINGS{Gondhalekar_Safe_Move-block_MPC,
  author={Gondhalekar, Ravi and Imura, Jun-ichi},
  booktitle={2007 46th IEEE Conference on Decision and Control}, 
  title={Recursive feasibility guarantees in move-blocking MPC}, 
  year={2007},
  volume={},
  number={},
  pages={1374-1379},
}

@article{Gondhalekar_Least_Restrictive_Move-block_MPC,
title = {Least-restrictive move-blocking model predictive control},
journal = {Automatica},
volume = {46},
number = {7},
pages = {1234-1240},
year = {2010},
issn = {0005-1098},
doi = {https://doi.org/10.1016/j.automatica.2010.04.010},
url = {https://www.sciencedirect.com/science/article/pii/S0005109810001676},
author = {Ravi Gondhalekar and Jun-ichi Imura},
}

@article{Son_semi-explicit_move-block_MPC,
title = {Move blocked model predictive control with improved optimality using semi-explicit approach for applying time-varying blocking structure},
journal = {Journal of Process Control},
volume = {92},
pages = {50-61},
year = {2020},
issn = {0959-1524},
doi = {https://doi.org/10.1016/j.jprocont.2020.04.002},
url = {https://www.sciencedirect.com/science/article/pii/S0959152420301992},
author = {Sang Hwan Son and Tae Hoon Oh and Jong Woo Kim and Jong Min Lee},
}

@article{Brunner_robust_self_trigger,
title = {Robust self-triggered MPC for constrained linear systems: A tube-based approach},
journal = {Automatica},
volume = {72},
pages = {73-83},
year = {2016},
issn = {0005-1098},
doi = {https://doi.org/10.1016/j.automatica.2016.05.004},
url = {https://www.sciencedirect.com/science/article/pii/S0005109816301881},
author = {Florian David Brunner and Maurice Heemels and Frank Allgöwer},
}

@INPROCEEDINGS{Zhan_robust_self_trigger,
  author={Zhan, Jingyuan and Li, Xiang and Jiang, Zhong-Ping},
  booktitle={2017 American Control Conference (ACC)}, 
  title={Self-triggered robust output feedback model predictive control of constrained linear systems}, 
  year={2017},
  volume={},
  number={},
  pages={3066-3071},
  doi={10.23919/ACC.2017.7963418}}

@ARTICLE{Dai_cost-based_self_trigger,
  author={Dai, Li and Cannon, Mark and Yang, Fuwen and Yan, Shuhao},
  journal={IEEE Transactions on Automatic Control}, 
  title={Fast Self-Triggered MPC for Constrained Linear Systems With Additive Disturbances}, 
  year={2021},
  volume={66},
  number={8},
  pages={3624-3637},
  doi={10.1109/TAC.2020.3022734}}

@ARTICLE{Lefevre_ACC_2016,
  author={Lefèvre, Stéphanie and Carvalho, Ashwin and Borrelli, Francesco},
  journal={IEEE Transactions on Automation Science and Engineering}, 
  title={A Learning-Based Framework for Velocity Control in Autonomous Driving}, 
  year={2016},
  volume={13},
  number={1},
  pages={32-42},
  }

@INPROCEEDINGS{Kim_ACC_2019,
  author={Kim, Yeojun and Guanetti, Jacopo and Borrelli, Francesco},
  booktitle={2019 18th European Control Conference (ECC)}, 
  title={Robust Eco Adaptive Cruise Control for Cooperative Vehicles}, 
  year={2019},
  volume={},
  number={},
  pages={1214-1219},
 }

@article{STEIN20112626,
title = {Implicit upper bound error estimates for combined expansive model and discretization adaptivity},
journal = {Computer Methods in Applied Mechanics and Engineering},
volume = {200},
number = {37},
pages = {2626-2638},
year = {2011},
issn = {0045-7825},
author = {Erwin Stein and Marcus Rüter and Stephan Ohnimus}
}

@INPROCEEDINGS{tomlin2017,
  author={Herbert, Sylvia L. and Chen, Mo and Han, SooJean and Bansal, Somil and Fisac, Jaime F. and Tomlin, Claire J.},
  booktitle={2017 IEEE 56th Annual Conference on Decision and Control (CDC)}, 
  title={FaSTrack: A modular framework for fast and guaranteed safe motion planning}, 
  year={2017},
  volume={},
  number={},
  pages={1517-1522}}

@article{VOELKER2013943,
title = {Moving horizon estimation: Error dynamics and bounding error sets for robust control},
journal = {Automatica},
volume = {49},
number = {4},
pages = {943-948},
year = {2013},
issn = {0005-1098},
author = {Anna Voelker and Konstantinos Kouramas and Efstratios N. Pistikopoulos}
}

@ARTICLE{hur2019,
  author={Kim, Jae-Kyeong and Ma, Jin and Sun, Kai and Lee, Jaegul and Shin, Jeonghoon and Kim, Yonghak and Hur, Kyeon},
  journal={IEEE Transactions on Power Systems}, 
  title={A Computationally Efficient Method for Bounding Impacts of Multiple Uncertain Parameters in Dynamic Load Models}, 
  year={2019},
  volume={34},
  number={2},
  pages={897-907},
}

@article{roy2010,
author = {Roy, Christopher},
year = {2010},
month = {01},
pages = {},
title = {Review of Discretization Error Estimators in Scientific Computing},
journal = {48th AIAA Aerospace Sciences Meeting Including the New Horizons Forum and Aerospace Exposition},
}

@article{Tarjan_SCC,
author = {Tarjan, Robert},
title = {Depth-First Search and Linear Graph Algorithms},
journal = {SIAM Journal on Computing},
volume = {1},
number = {2},
pages = {146-160},
year = {1972},
doi = {10.1137/0201010},
URL = {https://doi.org/10.1137/0201010},
eprint = {https://doi.org/10.1137/0201010}
}

\end{document}